\newtheorem{lem}{Lemma}
\newtheorem*{thm}{Theorem}
\newenvironment{addendum}{%
    \setlength{\parindent}{0in}%
    \small%
    \begin{list}{Acknowledgements}{%
        \setlength{\leftmargin}{0in}%
        \setlength{\listparindent}{0in}%
        \setlength{\labelsep}{0em}%
        \setlength{\labelwidth}{0in}%
        \setlength{\itemsep}{12pt}%
        }
    }
    {\end{list}\normalsize}
\newcommand{\rmsubscr}[1]{\textrm{#1}}
\newcommand{\eqendspace}{\;}
\DeclareMathOperator{\Var}{Var}
\DeclareMathOperator{\Mean}{Mean}
\DeclareMathOperator{\Span}{Span}
\DeclareMathOperator{\Trace}{Tr}
\DeclareMathOperator{\Prediction}{Prediction}
\DeclareMathOperator{\MSE}{MSE}
\newcommand{\MC}{\textrm{\upshape MC}}
\newcommand{\GGE}{\textrm{\upshape GGE}}
\newcommand{\lb}{\langle}
\newcommand{\rb}{\rangle}
\begin{document}

\title{Geometry of quantum observables and
thermodynamics of small systems}

\author{Maxim Olshanii}
\affiliation{Department of Physics, University of Massachusetts Boston, Boston, MA 02125, USA}

\begin{abstract}
The concept of ergodicity---the convergence of the temporal
averages of observables to their ensemble averages---is the cornerstone
of thermodynamics. The transition from a predictable, integrable
behavior to ergodicity is one of the most difficult
physical phenomena to treat; the celebrated KAM theorem is the
prime example. This Letter is founded on the observation that
for many classical and quantum observables,
%
%\begin{quote}
\begin{center}
\begin{minipage}{4.7in}
%%\begin{itemize}
%
%\item[]
%\mbox{}\quad
the sum of the ensemble
{\it variance} of the temporal average and the ensemble average of
temporal {\it variance} remains constant
across the integrability-ergodicity transition.
%
%%\end{itemize}
\end{minipage}
\end{center}
%\end{quote}
%
We show that this property induces
a particular geometry of quantum observables---Frobenius (also known as Hilbert-Schmidt) one---that
naturally encodes all the phenomena associated with the emergence of ergodicity:
the Eigenstate Thermalization effect \cite{deutsch1991,srednicki1994,rigol2008},
the decrease in the inverse participation ratio \cite{santos2009},
and the disappearance of the integrals of motion. As an
application, we use this geometry to solve a known problem
of optimization of the set of conserved
quantities---regardless of whether it comes from symmetries or from finite-size effects---to be incorporated in
an extended thermodynamical theory of integrable, near-integrable, or mesoscopic systems
\cite{rigol2007,rigol2011_140405,kollar2011_054304,gring2012_Science,gramsch2012_PRA}.
\end{abstract}

\maketitle

%%%%%%%%%%%%%%%%%%%% bgn: Introduction %%%%%%%%%%%%%%%%%%%%%%%%%%%%%%%
Consider a ball in a classical rectangular billiards with periodic boundary conditions, first without and then with a strong localized obstacle inside. (When present, the obstacle makes the system ergodic.) In both cases, consider two kinds of problems: 1. given a specific initial velocity of the ball, what are the subsequent temporal fluctuations in (say) the $x$-component of the velocity? 2. draw a bunch of initial velocities from a thermal distribution;  for each, compute the infinite time average of the $x$-component of the velocity; what is the variance among these infinite time averages?

If the obstacle is absent, both components of the velocity vector will
remain equal to their initial values forever, and so the temporal fluctuations are zero; as far as the second question, the variance of the infinite time averages is equal to that in the thermal distribution, and thus is large. On the other hand, if the obstacle is present, the situation is reversed (this will be so provided the ball keeps hitting the obstacle, which will happen provided the ratio of the velocity components, $v_{y}/v_{x}$, is incommensurate with the ratio of the billiard lengths, $L_{y}/L_{x}$; and this will hold with probability one as long as the initial velocities as sampled from some continuous probability distribution): now the temporal fluctuations in the velocity component are large no matter what the initial velocity; at the same time, the infinite time average of the velocity component is the same no matter what the initial velocity (namely, it is zero), and so the variance of the infinite time averages is zero. Moreover, as we will show, the shot-to-shot variance in the (exactly integrable) case of an empty billiards equals the temporal variance in the ergodic case.

Given some imagination, one may suspect a ``conservation law'' acting across the transition from
an integrable to an ergodic system as one increases the strength of the integrability-breaking perturbation.
Indeed, Fig.~\ref{Olshanii_fig1}a shows that in the case of square billiard
perturbed by a soft localized potential barrier in the middle \cite{supmat}, the sum of the microcanonical
variance of the time average, $\Var_{\MC}[\Mean_{t}[A]]$,
and the shot-to-shot average of the temporal variance, $\Mean_{\MC}[\Var_{t}[A]]$,
with the observable $A$ being the difference between the horizontal and vertical kinetic energies, remains
the same for the heights of the barrier less than or comparable to the kinetic energy.
For all points,
the statistical ensembles used were microcanonical ensembles
with the same phase-space volume ($W=1184.3$) covered and with the same phase-space volume
($W_{b}=7895.7$)
occupied by the phase-space points with energies below the lower-energy boundary of the
microcanonical window. Such a set of microcanonical ensembles is the closest classical analogue of
a quantum set that uses the same window of quantum state {\it indices} for all perturbation strengths used.
%%%%%%%%%%%%%%%%%%%%%%%
% original position of Olshanii_fig1
%%%%%%%%%%%%%%%%%%%%%%

%%%%%%%%%%%%%%%%%%%% end: Introduction %%%%%%%%%%%%%%%%%%%%%%%%%%%%%%%

%%%%%%%%%%%%%%%%%%%% bgn: Conjecture %%%%%%%%%%%%%%%%%%%%%%%%%%%%%%%%%
Let us now try to translate the conjecture expressed in the first paragraph of this Letter to quantum language.

Naively, one may try to replace the ensemble of classical initial conditions by an ensemble of random superpositions
of quantum eigenstates. However, consider the limiting case
of an integrable system. There, a single
given initial state will already cover a variety of---generally unrelated---sets of integrals of motion. To the
contrary, in the classical case, a given initial point corresponds to a single set. Accordingly, we suggest
an ensemble of randomly chosen eigenstates of the Hamiltonian as the ensemble of initial conditions. In particular,
the ensemble variance of the temporal means will be translated to the quantum language as
%
%\onecolumngrid
%\begin{widetext}
%\begin{eqnarray}
\begin{multline}
%
%&&
\Var_{\MC}[\Mean_{t}[A]]\Big{|}_{\textrm{QM}}
\equiv
\Var_{\MC}[\lb\alpha\vert \hat{A} \vert\alpha\rb]
%\nonumber
\\
%&&
\qquad =
(N_{\MC})^{-1} \sum_{\alpha \in [\MC]} (\lb\alpha\vert \hat{A} \vert\alpha\rb - \lb A \rb)^2
\eqendspace,
\label{Var_MC_Mean_t}
\end{multline}
%\end{eqnarray}
%\end{widetext}
%\twocolumngrid
%
where $\lb A \rb = \Mean_{\MC}[A] \equiv (N_{\MC})^{-1} \sum_{\alpha \in \MC} \lb\alpha\vert \hat{A} \vert\alpha\rb $
is the ensemble mean of the observable $\hat{A}$,
$\sum_{\alpha \in [\MC]} \ldots \equiv \sum_{\alpha \in [\alpha_{min},\,\alpha_{max}]} \ldots$ is a sum over a
microcanonical window bounded by some energies $E_{min}$ and $E_{max}$, and $E_{\alpha}$ is the energy spectrum
of the system. The vanishing of the fluctuations (\ref{Var_MC_Mean_t}) in the thermodynamic limit---the so-called
Eigenstate Thermalization effect---is sufficient
\cite{deutsch1991,srednicki1994,rigol2008} for the emergence of ergodicity. Our Letter, in part, aims to
devise a scale for these fluctuations that determines if a given observable is closer, by behavior, to an
integral of motion or to a thermalizable observable.

The question of a proper quantum analogue of the temporal fluctuations is both more involved and better studied.
A complication arises from the fact that a quantum state is altered after each measurement. However, consider
the following procedure: for every realization of the initial state of interest, the observable is measured only
once, and then the initial state is prepared again. For every instant of time of interest, the observable is
measured several times, and then, another instant of time is addressed. It has been argued\cite{srednicki1996_L75} that this
%%%%%%%%%%%%%%%%%%%%FIGURE<Olshanii_fig1>%%%%%%%%%%%%%%%%%%%%%%%%%%%%%%%%%%%%
\begin{widetext}
\mbox{}
\begin{figure}[h]
%\begin{widetext}
%[!t]
\begin{center}
\includegraphics[scale=.8]{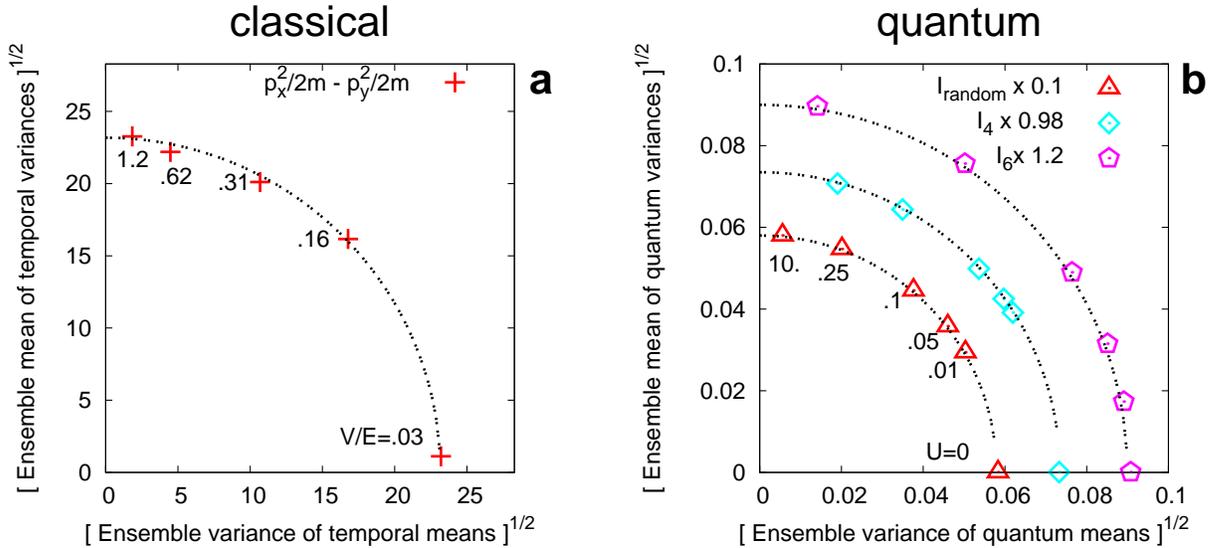}
\end{center}
\vspace{-0.7cm}
\caption{  \label{Olshanii_fig1}
%
%\sffamily\noindent\textbf{Figure 1}$\,|\,$
%
{\bf Two variances}. The units are such that $m = L/(2\pi) =1$, where $m$ is the particle mass, and $L$ is the side of the billiard.
{\bf a},
Ensemble mean of the temporal variance vs.\ the ensemble variance of the temporal
mean of the difference between the $x-$ and $y-$kinetic energies for
rectangular billiard with periodic boundary conditions perturbed by a soft-core barrier.
At zero barrier hight, the mean energy over the ensemble is $E_{0}=33.7$ .
The points plotted are labeled by the ratio, $V/E$, of the barrier hight to the mean energy of the ensemble
(that includes the energy of the barrier).
{\bf b},
Ensemble mean of the quantum variance vs.\ the ensemble variance of the
quantum expectation values for three integrals of motion of a system of hard core bosons, where the integrals of motion are partially destroyed by adding a soft-core two-body repulsive potential.
For this figure: the number of particles is $N=4$, the number of lattice sites is $L=16$, and we impose open
boundary conditions. The soft-core interaction potential has a constant hight $U$ at distances of
four sites or less, and it is zero otherwise.
$
\hat{I}_{4}
$
% change: \cite -> \citen
%(see Ref. \cite{OBC_integrals} for the exact expression)
and
$
\hat{I}_{6}
$
(see the text for expressions)
are the integrals of motion related to the forth  and sixth moments of the momentum distribution of the underlying free fermions.
$\hat{I}_{\textrm{random}}$ is an artificial integral of motion represented by a diagonal---in the basis of the unperturbed eigenstates---matrix with random entries, uniformly distributed between $-1$ and $+1$. For this observable, the square cosine of the angle between a point on the figure and the horizontal axis equals (up to small corrections of the order of $(N_{\MC})^{-1}$) the inverse participation ratio $\eta $ (see Eq.~(\ref{eta})). Observe also that the behavior of the two other observables is qualitatively simlar.
%\vfill
%\newpage
}
\end{figure}
\end{widetext}
%%%%%%%%%%%%%%%%%%%end FIGURE<Olshanii_fig1>%%%%%%%%%%%%%%%%%%%%%%%%%
procedure is indeed the most suitable quantum counterpart of the classical temporal fluctuations along
a trajectory. Furthermore, it has been suggested that in case of an ergodic motion, these fluctuations are
nothing else but the {\it thermal} fluctuations in the system\cite{srednicki1996_L75}. According to this scenario,
the quantum uncertainty in the results of the measurements is the way a quantum system emulates the classical instability
with respect to the initial conditions: in both cases, the outcome of a single measurement is irreproducible,
fundamentally so in the quantum case, and operationally so in the classical one. Accordingly, we define the quantum analogue
of the ensemble mean of the temporal variance as
%
%\begin{eqnarray}
\begin{multline}
%
%
%&&
\Mean_{\MC}[\Var_{t}[A]]\Big{|}_{\textrm{QM}}
%\nonumber
\\
%&& \qquad\qquad
 \equiv
%
%\qquad\qquad
\Mean_{\MC}[\lb\alpha\vert \hat{A}^2 \vert\alpha\rb - \lb\alpha\vert \hat{A} \vert\alpha\rb^2]
%\nonumber
\\
%&& \qquad\qquad
=
(N_{\MC})^{-1} \sum_{\alpha \in [\MC]} \lb\alpha\vert \hat{A}^2 \vert\alpha\rb - \lb\alpha\vert \hat{A} \vert\alpha\rb^2
\eqendspace.
\label{Mean_MC_Var_t}
\end{multline}
%\end{eqnarray}
%

Consider now an integrable system, with hamiltonian $\hat{H}_{0}$
perturbed by a non-integrable perturbation $\hat{V}$. The full hamiltonian reads
$
\hat{H} = \hat{H}_{0} + g \hat{V}
$,
where the parameter $g$ determines the degree by which the integrability is broken. Let the states
$\vert \alpha\rb $ be the eigenstates of the full hamiltonian.
The conjecture in the first paragraph
of the Letter then reads:
%
%\begin{eqnarray}
\begin{multline}
\Var_{\MC}[\Mean_{t}[A]]\Big{|}_{\textrm{QM}}
%&
+
%&
\Mean_{\MC}[\Var_{t}[A]]\Big{|}_{\textrm{QM}}
%\nonumber
\\
%&& \qquad
= \text{a constant independent of $g$}
\eqendspace.
\label{conjecture}
\end{multline}
%\end{eqnarray}
%
%
%%%%%%%%%%%%%%%%%%%% end: Conjecture %%%%%%%%%%%%%%%%%%%%%%%%%%%%%%%%%

%%%%%%%%%%%%%%%%%%%% bgn: HCB %%%%%%%%%%%%%%%%%%%%%%%%%%%%%%%%%%%%%%%%
The data presented in Fig.\ \ref{Olshanii_fig1}b test the conjecture (\ref{conjecture}) using the example
of a one-dimensional gas of lattice hard-core bosons perturbed by an added two-body soft-core
repulsive interaction. In one dimension, both continuous-space and lattice hard-core bosons are known
to be integrable: in both cases,
there is a map (Girardeau's map\cite{girardeau1960_516}, and the Jordan-Wigner transformation, respectively) that connects
the eigenstates of the system to the eigenstates of a free Fermi gas. The integrals of motion are thus represented
by the occupation numbers of the eigenstates of the one body hamiltonian for the particles of the underlying free Fermi gas.

We analyze the decay of the fourth and sixth integrals of motion,
\[
\hat{I}_{4}
= \frac{1}{2L} \sum_{j=1}^{L-2} (
                             (\hat{a}^{\dagger}_{j}\hat{a}^{}_{j+2} + h.c.)
                             -
                             (\hat{a}^{\dagger}_{1}\hat{a}^{}_{1} + \hat{a}^{\dagger}_{L}\hat{a}^{}_{L})
                          )
\,,
\]
\begin{multline*}
\hat{I}_{6}
= \frac{1}{2L} \sum_{j=1}^{L-3} (
                                 (\hat{a}^{\dagger}_{j}\hat{a}^{}_{j+3} + h.c.)\\
                                 -
                                 ((\hat{a}^{\dagger}_{1}\hat{a}^{}_{2}+h.c. + (\hat{a}^{\dagger}_{L-1}\hat{a}^{}_{L}+h.c.))
                               )
\,,
\end{multline*}
where
$\hat{a}^{}_{j}$ is the $j$-th site annihilation free-fermionic operator
(see also the Supplementary Discussion), as we increase the strength of the non-integrable
perturbation. One can see that while the quantum (the analogue of thermal) fluctuations gradually increase,
the deviations from the ergodicity decrease. However,
in accordance with the conjecture (\ref{conjecture}), the sum of the two variances remains approximately constant.
The square of the radius of the circles corresponds to the ensemble variance of the observable, over a series of {\it single}
measurements---with no subsequent quantum or temporal averaging---on a randomly chosen eigenstate.
%%%%%%%%%%%%%%%%%%%% end: HCB %%%%%%%%%%%%%%%%%%%%%%%%%%%%%%%%%%%%%%%%

%%%%%%%%%%%%%%%%%%%% bgn: Proof %%%%%%%%%%%%%%%%%%%%%%%%%%%%%%%%%%%%%%
Let us now reveal the intuition behind the conjecture (\ref{conjecture}). The left-hand-side of the
relationship (\ref{conjecture}) can be written, by rearranging the terms in Eqs.~\ref{Var_MC_Mean_t} and \ref{Mean_MC_Var_t},
as the {\it ensemble variance} of the observable $\hat{A}$:
%
%\begin{eqnarray}
\begin{multline}
%
%&&
\Var_{\MC}[\Mean_{t}[A]]\Big{|}_{\textrm{QM}} + \Mean_{\MC}[\Var_{t}[A]]\Big{|}_{\textrm{QM}}
%\nonumber
\\
%&& \qquad\qquad
= \Mean_{\MC}[A^2]-\Mean_{\MC}[A]^2
\eqendspace.
\label{ensemble_variance}
\end{multline}
%\end{eqnarray}
%

In turn, the ensemble variance on the right-hand-side of the above equation is a function of two ensemble means; and the key realization is that these ensemble means remain constant as the coupling constant $g$ of the integrability-breaking perturbation is increased from zero up to a value where the system first becomes ergodic. Indeed remark that typically, ergodicity emerges for an interaction strength much weaker than the one required to alter the thermal expectations of observables. For example, even though the van der Waals interactions between the molecules constituting air do not modify the Maxwell distribution, the interactions are strong enough to lead to a Maxwell distribution from any initial state.

Mathematically, the general criterion for the independence of the ensemble means on $g$ may be stated as follows. For every $g$, there is a characteristic energy interval $\delta E(g)$ such that the integrability-breaking perturbation appreciably couples only those eigenstates of the integrable hamiltonian whose energy difference is less than $\delta E(g)$. Our criterion is that $\delta E(g)$ be much smaller than the energy width of the microcanonical window used to define the ensemble means. The reason is as follows: consider the eigenstates of the unperturbed system. The perturbation results in their mutual coupling; but if the criterion is fulfilled, then we may neglect the coupling of states within the window to states outside the window. But if we do that, then we may, in fact, truncate our Hilbert space to just the states inside the window (let us call the corresponding hamiltonian a `truncated hamiltonian'). Indeed, let us first truncate, and then turn on the perturbation. In that case, the eigenstates of the perturbed truncated hamiltonian are related to those of the unperturbed truncated hamiltonian by a unitary transformation. And now note that the two ensemble means are in fact traces, and thus do not change under unitary transformation of the basis; thus they do not depend on $g$.

%%%%%%%%%%%%%%%%%%%% end: Proof %%%%%%%%%%%%%%%%%%%%%%%%%%%%%%%%%%%%%%

%%%%%%%%%%%%%%%%%%%% bgn: Hilbert-Schmidt %%%%%%%%%%%%%%%%%%%%%%%%%%%%

Observe now that in the right hand side of the relationship in Eq.~(\ref{ensemble_variance}), we find a quadratic polynomial, built out of the matrix elements of the observable $\hat{A}$, that is approximately invariant with respect to changes in the parameter $g$; in contrast, on the left-hand side is a sum of two quadratic polynomials that both vary with $g$. This realization inspires one to seek a geometric meaning of the relationship in Eq.~(\ref{ensemble_variance}), where
the right hand side is the square norm of an unknown vector that is linearly related to the operator $\hat{A}$, and the left hand side is a decomposition
of this square norm over some complementary subspaces (that change with the perturbation strength $g$) of the linear space the vector belongs to.
From what follows, we will see that this is almost what happens, with the exception of the $\Mean_{\MC}[A]^2$ term that belongs rather to the
left-hand side.
%%%%%%%%%%%%%%%%%%%%%%%FIGURE<Olshanii_fig2>%%%%%%%%%%%%%%%%%%%
\begin{figure}
%[!htb]
\begin{center}
\includegraphics[scale=.3]{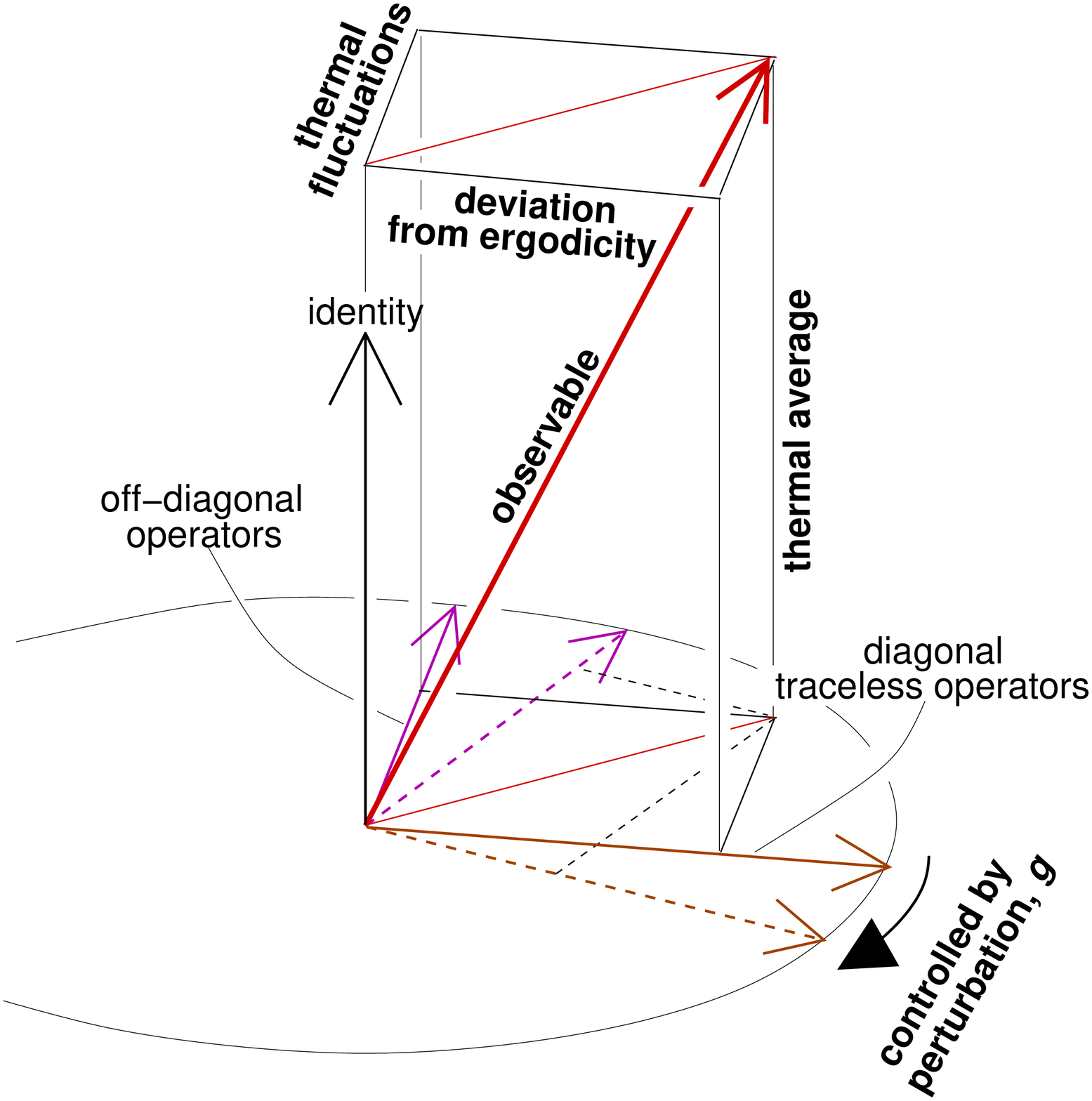}
\end{center}
\vspace{-0.7cm}
\caption{\label{Olshanii_fig2}
%
%\sffamily\noindent\textbf{Figure 2}$\,|\,$
%
{\bf The Hilbert-Schmidt geometry and quantum integrability-ergodicity transition}.
The $N^2$-dimensional
space of quantum operators acting in an $N$-dimensional
Hilbert space is divided onto a sum of three subspaces: a one-dimensional space spanned by the
identity operator, an $(N-1)$-dimensional space, ${\cal L}_{\rmsubscr{tl,\,d-}\hat{H}}$, of traceless
diagonal---in the basis of the eigenstates of the hamiltonian in question---real matrices, and
an $N(N-1)$-dimensional space of the purely off-diagonal---in the above sense---Hermitian matrices, ${\cal L}_{\rmsubscr{o-d-}\hat{H}}$.
If the coupling constant $g$ in the system hamiltonian
$
\hat{H} = \hat{H}_{0} + g \hat{V}
$
changes, the ${\cal L}_{\rmsubscr{tl,\,d-}\hat{H}}$ and ${\cal L}_{\rmsubscr{o-d-}\hat{H}}$ undergo a rotation, while the identity axis, the projection to which is proportional to the
microcanonical average, remains the same. However, the Hilbert-Schmidt vector corresponding to a given
observable of interest remains fixed along with
the identity axis (and thus also along with the
microcanonical average). Observables with a large projection onto ${\cal L}_{\rmsubscr{tl,\,d-}\hat{H}}$ correspond to quasi-conserved,
non-ergodic quantities. In the opposite case, a large projection onto the ${\cal L}_{\rmsubscr{o-d-}\hat{H}}$ space signifies a thermalizable observable,
whose infinite time average coincides with its thermal expectation value. Ergodicity is reached when ${\cal L}_{\rmsubscr{o-d-}\hat{H}}$ aligns
with the traceless versions of all empirically relevant observables---e.g. with all
one-, two-, and three-body obseravbles in a many-body system.
%\vfill
%\newpage
}
\end{figure}
%%%%%%%%%%%%%%%%%%%end FIGURE<Olshanii_fig2>%%%%%%%%%%%%%%%%%%%%%%%%%
The inner product that generates the anticipated geometric structure is the Frobenius, or Hilbert-Schmidt, inner product.

The Frobenius or Hilbert-Schmidt (HS) inner product between two matrices reads:
\begin{eqnarray}
(\hat{A} | \hat{B}) \equiv \Trace[\hat{A}^{\dagger} \hat{B}]
\eqendspace.
\label{HS}
\end{eqnarray}
Quantum observables, represented by Hermitian matrices, form a linear space over the field of real numbers; the product in Eq.~(\ref{HS})
induces a real-valued inner product on this space. Observe also that this product is invariant under unitary transformations $\hat{A}\mapsto\hat{U} \hat{A} \hat{U}^{-1}$, where $\hat{U}$ is a unitary matrix. Thus the unitary transformations form a subgroup of all possible linear transformations that preserve the product.

The Hilbert space where we are going to deploy the HS structure is a microcanonical window of eigenstates, of size $N_{\MC}$.
Let us introduce an $N_{\MC}^2$-dimensional linear space, over the field of real numbers, of all Hermitian operators (generally represented by complex matrices) acting within
an $N_{\MC}$-dimensional microcanonical window. The HS product in Eq.~(\ref{HS}) constitutes an inner product on this space.
This space can be conveniently decomposed into a direct sum 
of three pairwise orthogonal
subspaces. The first one is a one-dimensional space, ${\cal L}_{U}$, spanned by the identity operator. The second, ${\cal L}_{\rmsubscr{tl,\,d-}\hat{H}}$, is the space of all
traceless {\it integrals of motion}; i.e.\ the space of traceless operators that are purely
diagonal in the basis of the eigenstates $\vert \alpha \rb$.
Its dimension is $N_{\MC}-1$. The third one, ${\cal L}_{\rmsubscr{o-d-}\hat{H}}$, of dimension $N_{\MC}(N_{\MC}-1)$, is the space of all purely off-diagonal operators
in the same basis. Now, Eq.~(\ref{ensemble_variance}) can be
rewritten as
$
\Mean_{\MC}[A^2] = \Mean_{\MC}[A]^2 + \Var_{\MC}[\Mean_{t}[A]]\Big{|}_{\textrm{QM}} + \Mean_{\MC}[\Var_{t}[A]]\Big{|}_{\textrm{QM}}
$
and interpreted as a relationship between the norm of the operator $\hat{A}$ and its projections to the members of a complete set of orthogonal
subspaces.

The HS angle between the traceless version of a given observable, $\hat{A}_{\rmsubscr{tl}} \equiv \hat{A} - \Trace[\hat{A}]/N_{\MC}$,
and the space of the traceless integrals of motion constitutes a useful
measure that defines where the observable lies on the ``integral of motion'' vs.\
``thermalizable observable'' scale. Indeed, the tangent square of this angle is just
the ratio between (the quantum analogues of) the temporal fluctuations and the time-average vs.
ensemble average discrepancy:
\begin{eqnarray}
\tan^2[\theta_{{\cal L}_{\rmsubscr{tl,\,d-}\hat{H}} ,\, \hat{A}_{\rmsubscr{tl}}}] =
\frac{\Mean_{\MC}[\Var_{t}[A]]\Big{|}_{\textrm{ QM}}}{\Var_{\MC}[\Mean_{t}[A]]\Big{|}_{\textrm{QM}}}
\label{integrability_angle}
\end{eqnarray}
(see Fig.\ \ref{Olshanii_fig2} for an illustration).
Here and below, the angle between an HS vector $\hat{B}$ and an HS hyperplane ${\cal L}$
is defined through $\cos^2(\theta_{\hat{B},\,{\cal L}}) = \sum_{i} \cos^2(\theta_{\hat{B},\,\hat{e}_{i}});\,0\le \theta_{\hat{B},\,{\cal L}} \le \pi/2$,
where $\{\hat{e}_{i}\}$ is any orthonormalized basis set in ${\cal L}$.
If Eigenstate Thermalization \cite{deutsch1991,srednicki1994,rigol2008}
holds, the angle (\ref{integrability_angle}) approaches $90^{\circ}$.

The inverse participation ratio $\eta $, a measure closely related to the transition to thermal behavior
\cite{georgeot1997,santos2009,olshanii2012_641}, also acquires a clear geometric meaning. It defines
the average angle between the projector to an eigenstate of the integrable system $\hat{H}_{0}$ and the space of the
integrals of motion of the new system $\hat{H}$:
%
%\begin{eqnarray}
\begin{multline}
\mbox{}\qquad
\eta
%&
\equiv
%&
N_{\MC}^{-1} \sum_{\alpha,\,\alpha_{0} \in [\MC]}  |\lb \alpha_{0} \vert \alpha \rb|^4
%\qquad\qquad\mbox{}\nonumber
\\
%&& \qquad\qquad
=
\overline
{
\cos^2[\theta_{{\cal L}_{\rmsubscr{d-}\hat{H}} ,\,  (\vert \alpha_{0} \rb\lb \alpha_{0}\vert) }]
}^{\alpha_{0}}
\eqendspace,
\qquad\mbox{}
\label{eta}
\end{multline}
%\end{eqnarray}
%
where ${\cal L}_{\rmsubscr{d-}\hat{H}} = {\cal L}_{\rmsubscr{tl,\,d-}\hat{H}} \oplus {\cal L}_{U}$ is the space of all diagonal operators, {\it including} the identity.

%%%%%%%%%%%%%%%%%%%% end: Hilbert-Schmidt %%%%%%%%%%%%%%%%%%%%%%%%%%%%

%%%%%%%%%%%%%%%%%%%% bgn: GGE %%%%%%%%%%%%%%%%%%%%%%%%%%%%%%%%%%%%%%%%
Let us now turn to a particular application of our theory: namely,
to the problem of optimization of the set of conserved quantities to be used to
enhance the predictive power of thermodynamics, extended to include additional conserved quantities; both integrable systems
and the systems in between integrable and completely thermalizable will be considered. Since the first demonstration
of such enhancement, on the example  of
one-dimensional hard-core bosons \cite{rigol2007}, the question of which set of conserved quantities should be used in a general case has remained largely open. In the hard-core boson case
and in subsequent realizations \cite{cazalilla2006}, there existed a straightforward map between the
integrable system of interest and a hidden underlying system of free particles. Conventionally,
the occupation numbers of the one-body eigenstates of the latter were assumed to constrain the relaxation dynamics the most.
These were subsequently included in the Gibbs exponent producing the so-called Generalized Gibbs Ensemble (GGE)\cite{rigol2007}.

The above choice of the conserved quantities is indeed the most natural one. However, there exist several strong incentives to formalize
the choice of the conserved quantities for GGE: (a) there are numerical indications that in case of a disorder-induced {\it localization},
the one-body free-fermionic occupations do not improve the predictive power of thermodynamics at all \cite{gramsch2012_PRA};
(b) not every integrable system can be mapped to free particles; and (c$\mbox{)}$ in the intermediate systems,
between integrable and ergodic, the variations of the expectation values of observables from one eigenstate to another are,
for all practical purposes, indistinct from the ones generated by well defined integrals of motion.
It is therefore desirable to devise a thermodynamic recipe that does not rely on a priori chosen constants of motion.
Ideally, one should not even assume that constants of motion exist.
The recipe we suggest is presented below.

To begin, one identifies a linear subspace of observables whose relaxation the generalized ensemble aims to describe.
Next, one chooses a diagonal traceless observable that minimizes the Hilbert-Schmidt angle between the space of observables of interest and itself. Next, one repeats the procedure in the space of diagonal traceless observables orthogonal to the first chosen. The procedure
is repeated recursively (every time in the space orthogonal to all the integrals of motion previously chosen) till the desired
predictive power is reached.
This procedure is based on the following exact result and its corollaries, where the HS structure naturally emerges:
\begin{widetext}
\begin{equation}
%\begin{multline}
%
%&&
\Var_{\GGE}[\lb\alpha\vert \hat{A} \vert\alpha\rb] \; \leq
%\\
     \Var_{\MC}[\lb\alpha\vert \hat{A} \vert\alpha\rb] \,
     \left(\sin^2[\theta_{\hat{I}_{\rmsubscr{tl}},\, \hat{\hat{P}}_{H} \hat{A}_{\rmsubscr{tl}}}]
        +
        |\cos[\theta_{\hat{I}_{\rmsubscr{tl}},\, \hat{\hat{P}}_{H} \hat{A}_{\rmsubscr{tl}}}]| \,
        \underbrace
        {
        {\cal O}\left[
      	             \frac{\Delta I}{\sqrt{\Var_{\MC}[\lb\alpha\vert \hat{I} \vert\alpha\rb]}}
      	           \right]
        }_{\ll 1}
     \right)
\label{upper_bound_000}
\eqendspace,
%\nonumber
%
%\end{multline}
\end{equation}
\end{widetext}
where
$\hat{\hat{P}}_{H}$ is a ``super-operator'' that removes the off-diagonal (with respect to the basis of the eigenstates
of $\hat{H}$) matrix elements, $\Delta I \equiv \max_{j} (I_{j+1} - I_{j})$ is the maximal width of the microcanonical window for the additional integral of motion,
and $\{[I_{j+1} - I_{j}]\}$ is a set of intervals tiling the axis of the integral of motion $I$. (See the extensive discussion in Supplementary Material \cite{supmat}.)
Here and below, $\Var_{\GGE}[\lb\alpha\vert \hat{A} \vert\alpha\rb]$ and $\Var_{\MC}[\lb\alpha\vert \hat{A} \vert\alpha\rb]$ define the mean square error
of the (microcanonical version of) GGE and
% change:
% the microcanonical proper.
of the microcanonical ensemble proper.

Figure~\ref{Olshanii_fig3} shows the result of an application of this procedure to another system of one-dimensional hard-core bosons, smaller than before so that finite-size effects are enhanced. The space of observables of interest was formed by all distinct matrix elements of the one-body density matrix. We analyze the momentum distribution,
considering both the integrable case (Figs.\ \ref{Olshanii_fig3}a,c) and the case where the integrability is broken by a soft-core finite-range repulsive potential (Figs.\ \ref{Olshanii_fig3}b,d). We compare three thermodynamical ensembles. The first one is the traditional microcanonical ensemble. The second is a generalized microcanonical ensemble that fixes not only the value of the energy but generally all one-body occupation numbers of the underlying  free-fermionic system (or, in the nonintegrable case, their time-averaged values). The third is an ensemble based on the first few most relevant integrals of motion chosen using the Hilbert-Schmidt optimization procedure.
Fig.\ \ref{Olshanii_fig3}c demonstrates that in the integrable case, the accuracy of the second and the third ensembles are comparable while both greatly exceed the one of the conventional thermal ensemble. This indicates that the free-fermionic integrals of motion are indeed the optimal predictors of the after-relaxation state of the system in this case. However, the free-fermionic one-body occupations are not available at all in the
perturbed case (see Fig.\ \ref{Olshanii_fig3}d). Nevertheless, the optimized generalized microcanonical ensemble remains well defined, and it fully retains its predictive power.
Remark that for the strength of perturbation chosen, the system remains relatively close to integrable, showing a high inverse participation ratio of $\eta=.29$.

Finally, in Fig.\ \ref{Olshanii_fig4} we consider a quench from the ground state in the integrable regime to a strongly perturbed regime ($\eta=.023$). Here we directly compare the infinite time average with the predictions of both the microcanonical and the optimal GGE ensembles. The predictive power of the latter
% change:
% indeed appears to be higher than of the former.
is indeed higher than of the former.

%

%

%%%%%%%%%%%%%%%%%%%% end: GGE %%%%%%%%%%%%%%%%%%%%%%%%%%%%%%%%%%%%%%%%

%%%%%%%%%%%%%%%%%%%% bgn: Summary %%%%%%%%%%%%%%%%%%%%%%%%%%

Two distinct sources of deviation from the thermal behavior are traditionally identified. The first, the mathematically elegant one,
is associated with either nontrivial symmetries or with the Bethe ansatz \cite{gaudin1983_book,sutherland2004_book}.
The second source, the empirically important, stems from the deviations
from the Eigenstate Thermalization \cite{rigol2008} in finite systems. In small systems,---such as the nano-opto-mechanical resonators
\cite{teufel2011_359,chan2011_89,verhagen2012_63,oconnel2010_697,groblacher2009_724,teufel2011_204,aspelmeyer2012_29}---the two become practically indistinct, and no obvious candidates for the relevant conserved quantities are any longer visible: here, our theory offers a unified approach, based on a ``blind'' optimization of the predictive power of thermodynamics.

%
%%%%%%%%%%%%%%%%%%%%%%%FIGURE<Olshanii_fig3>%%%%%%%%%%%%%%%%%%%
\begin{widetext}
\mbox{}
\begin{figure}[h]
%[!htb]
\begin{center}
\includegraphics[scale=.5]{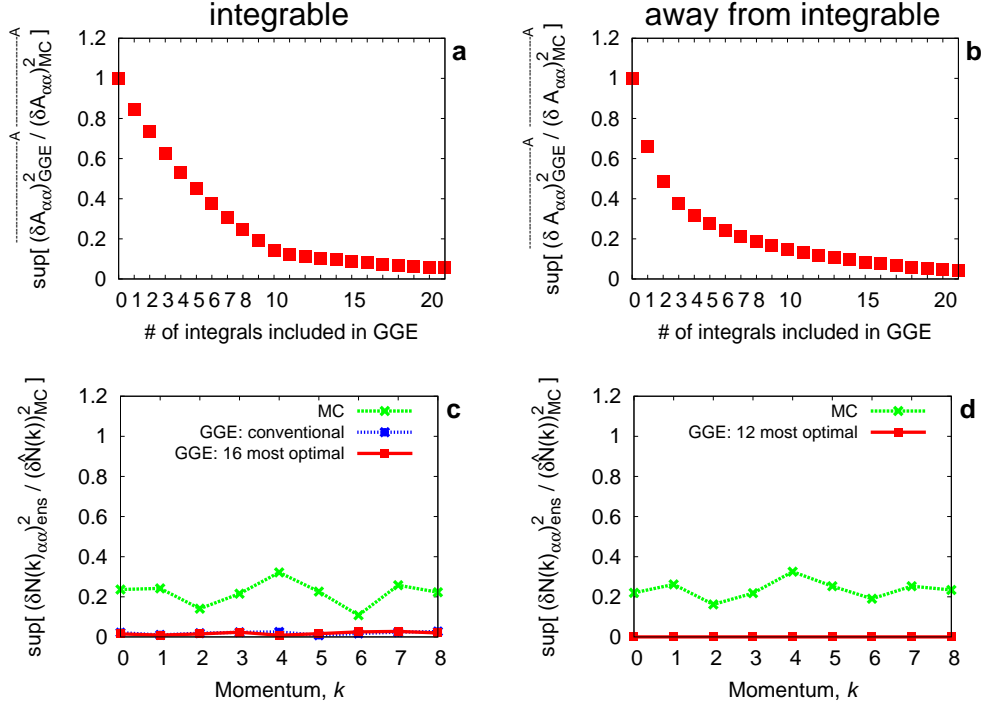}
\end{center}
\vspace{-0.7cm}
\caption {\label{Olshanii_fig3}
%
%\sffamily\noindent\textbf{Figure 3}$\,|\,$
%
{\bf Predictive power of two extensions of the microcanonical ensemble}. We use the example of a system of one-dimensional hard-core bosons. For this figure: the number of particles is $N=3$, we use periodic boundary conditions, and the rest of the parameters is the same as for Fig.\ \ref{Olshanii_fig1}b.
{\bf a}, An upper bound \cite{supmat} on the mean-square error of the predictions of the Generalized Gibbs Ensemble, optimal with respect to all one-body observables (the optimal GGE), as a function of the allowed number of the additional integrals of motion involved. The error of the standard microcanonical (MC) ensemble is used as a reference. We consider the integrable case, $U=0$. The result is averaged over all one-body observables.
{\bf b}, The same as for {\bf a}, but away from integrability, with $U=.1$.
{\bf c}, An upper bound on the mean-square error of the predictions of the optimal GGE, for the
momentum distribution, in the integrable case ($U=0$). The number of the integrals of motion is fixed to the number of lattice sites, $L=16$. The results for the microcanonical ensemble, and for the conventional free-fermionic Generalized Gibbs Ensemble (GGE)\cite{rigol2007} are shown for comparison. The ensemble variance of the observables is used as a reference.
{\bf d}, The same as for {\bf c}, but with $U=.1$\,. Note that in the nonintegrable case, no free theory is available for comparison.
}
\end{figure}
\end{widetext}
%%%%%%%%%%%%%%%%%%%end FIGURE<Olshanii_fig3>%%%%%%%%%%%%%%%%%%%%%%%%%
%

%

%%%%%%%%%%%%%%%%%%%%%%%FIGURE<Olshanii_fig4>%%%%%%%%%%%%%%%%%%%
\begin{figure}
%[!htb]
\begin{center}
\includegraphics[scale=.6]{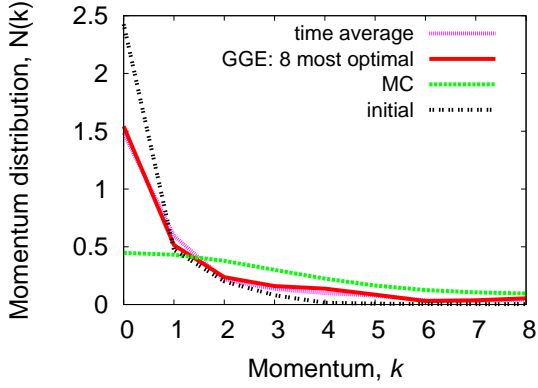}
\end{center}
\vspace{-0.7cm}
\caption{\label{Olshanii_fig4}
%
%\sffamily\noindent\textbf{Figure 2}$\,|\,$
%
{\bf Momentum distribution after
a quench from the ground state of another hamiltonian}.
The initial state is the ground state of a hard-core boson hamiltonian for $N=4$ atoms on $L=16$ sites, with periodic boundary conditions. At $t=0+$, a soft-core repulsion
of strength $U=3$ is turned on. The microcanonical ensemble is represented by $N_{\MC} = 300$ lowest eigenstates.
The GGE ensemble incorporates the values of 8 most optimal integrals of motion. Each integral of motion (except the first, which was strongly
correlated with the energy) was fixed to a window around its initial state value. The half-width of each window was 10\% of the corresponding
microcanonical standard deviation.
%\vfill
%\newpage
}
\end{figure}
%%%%%%%%%%%%%%%%%%%end FIGURE<Olshanii_fig4>%%%%%%%%%%%%%%%%%%%%%%%%%

%%%%%%%%%%%%%%%%%%%% end: Summary %%%%%%%%%%%%%%%%%%%%%%%%%

\section{SUPPLEMENTARY DISCUSSION}

\subsection{1. Two lemmas used to establish the upper bounds on the accuracy of the Generalized Gibbs Ensembles.}
%
%
%\begin{itemize}
%
%\item[]
%
%\underline{Lemma 1.}\,\,
\begin{lem}
\label{lemma_relating_the_three}
%
% change: "orthogonal projection matrix" -> "matrix of orthogonal
% projection"
%
Let $\hat{q}$ be a real $n \times n$ matrix of orthogonal projection
(i.e.  let $\hat{q}$ be idempotent and symmetric: $\hat{q}\hat{q} = \hat{q}$ and $\hat{q}  = \hat{q}^{T}$, where $^{T}$ denotes transposition). Then
for any two $n$-dimensional real vectors $\vec{a}$ and $\vec{\imath}$,
\begin{eqnarray}
\|\hat{q}\vec{a}\|^2 \le
\|\vec{a}\|^2
  \left\{
	\sin^2[\theta_{\vec{\imath},\,\vec{a}}]
	+
	2 |\!\cos[\theta_{\vec{\imath},\,\vec{a}}]|\frac{\|\hat{q}\vec{\imath}\|}{\|\vec{\imath}\|}
  \right\}
\eqendspace,
\label{upper_bound_abstract}
\end{eqnarray}
where
$\theta_{\vec{v}_{1},\,\vec{v}_{2}}$ is the angle between two vectors $\vec{v}_{1}$ and $\vec{v}_{2}$, i.e.\
$\theta_{\vec{v}_{1},\,\vec{v}_{2}} \equiv \arccos[(\vec{v}_{1}\cdot\vec{v}_{2})/(\|\vec{v}_{1}\| \|\vec{v}_{2}\|)]$
and
$0 \le \theta_{\vec{v}_{1},\vec{v}_{2}} \le \pi$, and $\|\vec{v}\| \equiv \sqrt{(\vec{v}\cdot\vec{v})}$ is the norm of a vector $\vec{v}$.
%
%\end{itemize}
\end{lem}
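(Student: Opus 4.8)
The plan is to reduce the inequality to a one‑variable trigonometric estimate after resolving $\vec{a}$ along $\vec{\imath}$. Throughout write $\theta\equiv\theta_{\vec{\imath},\,\vec{a}}$, $s\equiv\sin\theta\ge 0$ (recall $0\le\theta\le\pi$), $c\equiv|\cos\theta|$, and $r\equiv\|\hat{q}\vec{\imath}\|/\|\vec{\imath}\|\in[0,1]$, so that the target reads $\|\hat{q}\vec{a}\|^{2}\le\|\vec{a}\|^{2}(s^{2}+2cr)$; I would dispose of the degenerate cases $\vec{a}=0$ (trivial) and $\vec{\imath}=0$ (which makes the stated right‑hand side ill‑defined, so it is implicitly excluded) at the outset. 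The only background fact needed is that an orthogonal projection is a contraction, $\|\hat{q}\vec{x}\|\le\|\vec{x}\|$ for every $\vec{x}$: expanding $\|\vec{x}\|^{2}=\|\hat{q}\vec{x}+(\vec{x}-\hat{q}\vec{x})\|^{2}$, the cross term equals $\vec{x}^{T}(\hat{q}-\hat{q}^{2})\vec{x}=0$ by idempotency and symmetry, leaving $\|\vec{x}\|^{2}=\|\hat{q}\vec{x}\|^{2}+\|\vec{x}-\hat{q}\vec{x}\|^{2}\ge\|\hat{q}\vec{x}\|^{2}$.

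First I would split on the magnitude of the right‑hand side. If $s^{2}+2cr\ge 1$, the lemma follows immediately from the contraction bound $\|\hat{q}\vec{a}\|^{2}\le\|\vec{a}\|^{2}$, so the whole content lies in the complementary regime $s^{2}+2cr<1$. There $2cr<1-s^{2}=c^{2}$, which forces $c>0$ and, decisively, $r<c/2$. Next I would decompose $\vec{a}=\vec{a}_{\parallel}+\vec{a}_{\perp}$ into its orthogonal projections onto $\Span\{\vec{\imath}\}$ and onto the orthogonal complement, so $\|\vec{a}_{\parallel}\|=c\|\vec{a}\|$ and $\|\vec{a}_{\perp}\|=s\|\vec{a}\|$. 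Applying $\hat{q}$, using linearity and the triangle inequality, then estimating each piece, gives $\|\hat{q}\vec{a}\|\le\|\hat{q}\vec{a}_{\parallel}\|+\|\hat{q}\vec{a}_{\perp}\|=r\|\vec{a}_{\parallel}\|+\|\hat{q}\vec{a}_{\perp}\|\le cr\|\vec{a}\|+s\|\vec{a}\|$, where the equality $\|\hat{q}\vec{a}_{\parallel}\|=r\|\vec{a}_{\parallel}\|$ uses that $\vec{a}_{\parallel}$ is a scalar multiple of $\vec{\imath}$, and the final step again uses the contraction bound on $\vec{a}_{\perp}$. Squaring yields $\|\hat{q}\vec{a}\|^{2}\le(s^{2}+2crs+c^{2}r^{2})\|\vec{a}\|^{2}$.

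The one step that is not mere bookkeeping is to absorb $2crs+c^{2}r^{2}$ into $2cr$, i.e.\ to show $cr+2s\le 2$ (nothing to prove when $cr=0$); this is exactly where the bound $r<c/2$ pays off: it gives $cr<c^{2}/2=(1-s)(1+s)/2\le 1-s$, hence $cr+2s<(1-s)+2s=1+s\le 2$. Substituting back gives $\|\hat{q}\vec{a}\|^{2}\le(s^{2}+2cr)\|\vec{a}\|^{2}$, as claimed. I expect this to be the main obstacle in the sense that a single unrefined application of the triangle inequality is genuinely too weak --- the estimate $\|\hat{q}\vec{a}\|^{2}\le(cr+s)^{2}\|\vec{a}\|^{2}$ exceeds the asserted bound whenever $\sin\theta$ is large --- and the resolution is precisely the observation that the asserted bound is informative only when $s^{2}+2cr<1$, a regime in which automatically $r<\tfrac{1}{2}|\cos\theta|$ and the elementary inequality $(1-s)(1+s)\le 2(1-s)$ closes the gap; the coefficient $2$ on the cosine term in the statement is exactly the amount of slack this argument uses.
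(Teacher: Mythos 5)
Your proof is correct, but it takes a genuinely different route from the paper's. The paper works with the complementary projection $\hat{p}=\hat{1}-\hat{q}$: it writes $\|\hat{q}\vec{a}\|^2=\|\vec{a}\|^2-\|\hat{p}\vec{a}\|^2$, lower-bounds $\|\hat{p}\vec{a}\|^2$ by the square of its component along $\vec{\imath}$ via Cauchy--Schwarz, moves $\hat{p}$ onto $\vec{\imath}$ by symmetry, splits $\hat{p}\vec{\imath}=\vec{\imath}-\hat{q}\vec{\imath}$, and finishes with the reverse triangle inequality, dropping a positive term, and one more Cauchy--Schwarz; the factor $2$ on the cosine term comes out of the cross term of a square in a single uninterrupted chain with no case analysis. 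You instead decompose $\vec{a}$ along $\vec{\imath}$, apply the triangle inequality together with the contraction property of $\hat{q}$, and then rescue the resulting bound $(s+cr)^2$ --- which, as you correctly note, can exceed $s^2+2cr$ when $s$ is large --- by observing that the asserted bound is nontrivial only when $s^2+2cr<1$, which forces $r<c/2$ and lets the elementary estimate $cr+2s\le 1+s\le 2$ close the gap. Each step checks out, including the identity $\|\hat{q}\vec{a}_{\parallel}\|=r\|\vec{a}_{\parallel}\|$ and the handling of the $cr=0$ and degenerate-vector cases. What your version buys is an explicit explanation of \emph{where} the slack in the constant $2$ is spent and why the lemma only has content in the regime $s^2+2cr<1$; what the paper's version buys is brevity and the absence of a case split, since the complementary-projector identity does the work of your contraction argument and the Cauchy--Schwarz steps automatically produce the correct cross term without needing the scalar inequality.
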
%
%
%\underline{Proof.}\,\,
\begin{proof}
Let $\hat{p} \equiv \hat{1} - \hat{q}$ be the orthogonal projection
complementary to $\hat{q}$. Similarly to $\hat{q}$, it is idempotent and symmetric.
Any vector $\vec{v}$ can be decomposed onto the sum $\vec{v} = \hat{p}\vec{v} + \hat{q}\vec{v}$, with $(\hat{p}\vec{v} \cdot \hat{q}\vec{v}) = 0$. As a consequence,
$\|\vec{v}\|^2 = \|\hat{p}\vec{v}\|^2 + \|\hat{q}\vec{v}\|^2$.
The proof goes as follows:
\begin{eqnarray}
\|\hat{q}\vec{a}\|^2
&
%\stackrel{\|\vec{a}\|^2 = \|\hat{p}\vec{a}\|^2 + \|\hat{q}\vec{a}\|^2}{=}
=
&
\|\vec{a}\|^2 - \|\hat{p}\vec{a}\|^2
\nonumber
\;\;\mbox{(using $\|\vec{a}\|^2 = \|\hat{p}\vec{a}\|^2 + \|\hat{q}\vec{a}\|^2$)}
\\
&\stackrel{\textrm{CSI}}{\le}&
\|\vec{a}\|^2 - \frac{(\vec{\imath} \cdot \hat{p}\vec{a})^2}{\|\vec{\imath}\|^2}
\nonumber
\\
&
%\stackrel{\hat{p} = \textrm{symmetric}}{=}
=
&
\|\vec{a}\|^2 - \frac{(\hat{p} \vec{\imath} \cdot \vec{a})^2}{\|\vec{\imath}\|^2}
\nonumber
\;\; \mbox{(using that $\hat{p}$ is symmetric)}
\\
&
%\stackrel{\hat{p} = \hat{1} - \hat{q}}{=}
=
&
\|\vec{a}\|^2 - \frac{\left((\vec{\imath} \cdot \vec{a}) - (\hat{q} \vec{\imath} \cdot \vec{a}) \right)^2 }{\|\vec{\imath}\|^2}
\nonumber
\;\; \mbox{(using $\hat{p} = \hat{1} - \hat{q}$)}
\\
&\stackrel{\textrm{RTI}}{\le}&
\|\vec{a}\|^2 - \frac{\left( |(\vec{\imath} \cdot \vec{a})| - |(\hat{q} \vec{\imath} \cdot \vec{a})| \right)^2 }{\|\vec{\imath}\|^2}
\nonumber
\\
&=&
\|\vec{a}\|^2
-
\frac{
	   (\vec{\imath} \cdot \vec{a})^2 - 2 |(\vec{\imath} \cdot \vec{a})||(\hat{q} \vec{\imath} \cdot \vec{a})| +(\hat{q} \vec{\imath} \cdot \vec{a})^2
	 }
	 {
	   \|\vec{\imath}\|^2
	 }
\nonumber	
\\
&\le &
\|\vec{a}\|^2
-
\frac{
	   (\vec{\imath} \cdot \vec{a})^2 - 2 |(\vec{\imath} \cdot \vec{a})||(\hat{q} \vec{\imath} \cdot \vec{a})|
	 }
	 {
	   \|\vec{\imath}\|^2
	 }	
\nonumber
\\
&\stackrel{\textrm{CSI}}{\le} &
\|\vec{a}\|^2
-
\frac{
	   (\vec{\imath} \cdot \vec{a})^2 - 2 |(\vec{\imath} \cdot \vec{a})| \|\hat{q} \vec{\imath}\| \|\vec{a}\|
	 }
	 {
	   \|\vec{\imath}\|^2
	 }	
\nonumber
\\
&=&
\|\vec{a}\|^2
  \left\{
	\sin^2[\theta_{\vec{\imath},\,\vec{a}}]
	+
	2 |\!\cos[\theta_{\vec{\imath},\,\vec{a}}]|\frac{\|\hat{q}\vec{\imath}\|}{\|\vec{\imath}\|}
  \right\}
\eqendspace.
\nonumber
%\\
%&&
%\left.\right.
%\qquad\qquad\qquad\qquad
%\qquad\qquad\qquad\qquad
%\qed
%%\mbox{{\it Q.E.D.}}
%\nonumber
%
\end{eqnarray}
\end{proof}
Here, CSI stands for the Cauchy-Schwarz inequality, $|(\vec{v}_{1}\cdot\vec{v}_{2})| \le \|\vec{v}_{1}\| \|\vec{v}_{2}\| $, and
RTI denotes the reverse triangle inequality, $|x-y| \le ||x|-|y||$, $x$ and $y$ being real numbers.

%
%\begin{itemize}
%
%\item[]
%
%\underline{Lemma 2.}\,\,
\begin{lem}
\label{lemma_on_two_reals}
For any two real numbers $\alpha $ and $\sigma $,
\begin{multline}
%\begin{eqnarray}
%
\|\vec{a}_{\perp} + \alpha\vec{u}\|^2
\sin^2[\theta_{\vec{\imath}_{\perp} + \sigma \vec{u},\, \vec{a}_{\perp} + \alpha\vec{u}}]
%&&
\ge
%\nonumber
\\
%&& \mbox{}\hspace{-5em}\mbox{}
\|\vec{a}_{\perp}\|^2
\sin^2[\theta_{\vec{\imath}_{\perp},\, \vec{a}_{\perp}}]
\eqendspace,
\label{lemma_2}
%
%\end{eqnarray}
\end{multline}
where $\vec{a}_{\perp}$, $\vec{\imath}_{\perp}$ are real $n$-dimensional vectors, $\vec{u}$ is a real unit $n$-dimensional vector, $\|\vec{u}\| = 1$,
and $(\vec{a}_{\perp} \cdot \vec{u}) = (\vec{\imath}_{\perp} \cdot \vec{u}) = 0$.
%
%\end{itemize}
\end{lem}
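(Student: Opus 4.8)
The plan is to recognize that $\|\vec{v}\|^2\sin^2[\theta_{\vec{w},\vec{v}}]$ is nothing but the squared length of the component of $\vec{v}$ orthogonal to $\vec{w}$, i.e. $\|\vec{v}\|^2 - (\vec{v}\cdot\vec{w})^2/\|\vec{w}\|^2$, and then to reduce the whole inequality to the nonnegativity of a single perfect square. First I would dispose of the degenerate case $\vec{\imath}_{\perp}=\vec{0}$, where the angle $\theta_{\vec{\imath}_{\perp},\vec{a}_{\perp}}$ is defined only by the convention $\sin^2=1$ (the ``line'' spanned by $\vec{0}$ being the single point $\vec{0}$): if moreover $\sigma=0$ then the left side is $\|\vec{a}_{\perp}+\alpha\vec{u}\|^2 = \|\vec{a}_{\perp}\|^2+\alpha^2 \ge \|\vec{a}_{\perp}\|^2$, the right side; and if $\sigma\neq0$ the spanning vector on the left is a nonzero multiple of $\vec{u}$, so both sides equal $\|\vec{a}_{\perp}\|^2$. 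From here on I assume $\|\vec{\imath}_{\perp}\|>0$, so that $\|\vec{\imath}_{\perp}+\sigma\vec{u}\|^2=\|\vec{\imath}_{\perp}\|^2+\sigma^2>0$ as well.

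Next I would use the orthogonality hypotheses $(\vec{a}_{\perp}\cdot\vec{u})=(\vec{\imath}_{\perp}\cdot\vec{u})=0$ and $\|\vec{u}\|=1$ to expand everything in terms of three scalars, $A\equiv\|\vec{a}_{\perp}\|^2$, $I\equiv\|\vec{\imath}_{\perp}\|^2$, and $p\equiv(\vec{a}_{\perp}\cdot\vec{\imath}_{\perp})$: all cross terms drop out, giving $\|\vec{a}_{\perp}+\alpha\vec{u}\|^2=A+\alpha^2$, $\|\vec{\imath}_{\perp}+\sigma\vec{u}\|^2=I+\sigma^2$, and $\big((\vec{a}_{\perp}+\alpha\vec{u})\cdot(\vec{\imath}_{\perp}+\sigma\vec{u})\big)=p+\alpha\sigma$. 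Substituting into the ``orthogonal component'' form, the claimed inequality becomes
\[
\Big(A+\alpha^2-\frac{(p+\alpha\sigma)^2}{I+\sigma^2}\Big)-\Big(A-\frac{p^2}{I}\Big)\ \ge\ 0 .
\]
Clearing the positive denominator $I(I+\sigma^2)$ turns the left-hand side into $\alpha^2 I(I+\sigma^2)-I(p+\alpha\sigma)^2+p^2(I+\sigma^2)$, which I expect collapses after cancellation to $(\alpha I-p\sigma)^2$, i.e. to $\big(\alpha\|\vec{\imath}_{\perp}\|^2-(\vec{a}_{\perp}\cdot\vec{\imath}_{\perp})\,\sigma\big)^2$. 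Since this is manifestly nonnegative, the lemma follows, and in fact with an explicit nonnegative remainder term $\big(\alpha\|\vec{\imath}_{\perp}\|^2-(\vec{a}_{\perp}\cdot\vec{\imath}_{\perp})\sigma\big)^2\big/\big(\|\vec{\imath}_{\perp}\|^2(\|\vec{\imath}_{\perp}\|^2+\sigma^2)\big)$.

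There is no genuine analytic obstacle here: the content of the lemma is precisely that one perfect-square identity, and the only ``aha'' step is to rewrite $\|\vec{v}\|^2\sin^2[\theta_{\vec{w},\vec{v}}]$ as a squared orthogonal projection so that the algebra becomes transparent. The one thing to be careful about in the write-up is the bookkeeping of which bilinear terms vanish by the orthogonality hypotheses, and the clean treatment of the $\vec{\imath}_{\perp}=\vec{0}$ case, where the angle entering the right-hand side exists only by convention rather than by the $\arccos$ formula.
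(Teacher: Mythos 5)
Your proof is correct and follows essentially the same route as the paper's: both rewrite $\|\vec{v}\|^2\sin^2[\theta_{\vec{w},\,\vec{v}}]$ as the squared norm of the component of $\vec{v}$ orthogonal to $\vec{w}$ and reduce the claim to the nonnegativity of a quadratic expression in $\alpha$ (the paper minimizes that quadratic over $\alpha$, you complete the square explicitly, which is the same computation). Your version has the minor advantages of exhibiting the explicit nonnegative remainder $\bigl(\alpha\|\vec{\imath}_{\perp}\|^2-(\vec{a}_{\perp}\cdot\vec{\imath}_{\perp})\,\sigma\bigr)^2\big/\bigl(\|\vec{\imath}_{\perp}\|^2(\|\vec{\imath}_{\perp}\|^2+\sigma^2)\bigr)$ and of treating the degenerate case $\vec{\imath}_{\perp}=\vec{0}$, where the paper's division by $1-(\vec{e}_{\vec{\imath}}\cdot\vec{u})^2$ breaks down.
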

%
%
%\underline{Proof.}\,\,
\begin{proof}
Since neither the left- nor the right-hand-side of the inequality (\ref{lemma_2}) depends on the norm of $\vec{\imath}$,
we may assume, without loss of generality, that $\vec{\imath}$ is a unit norm vector: $\vec{\imath} = \vec{e}_{\vec{\imath}}$, where
$\| \vec{e}_{\vec{\imath}} \| = 1$.
It can be further decomposed as $\vec{e}_{\vec{\imath}} = \cos(\eta) (\vec{e}_{\vec{\imath}})_{\perp} + \sin(\eta) \vec{u}$, where
$(\vec{e}_{\vec{\imath}})_{\perp} \equiv \vec{e}_{\vec{\imath}} - (\vec{u} \cdot \vec{e}_{\vec{\imath}}) \vec{u}$. The proof is as follows:
\begin{eqnarray}
&&
%\left.\right.
%\!\!\!\!\!\!\!\!
%\!\!\!\!\!\!\!\!
%\!\!\!\!\!\!\!\!
%\!\!\!\!\!\!\!\!
%\!\!\!\!\!\!\!\!
%\!\!\!\!\!\!\!\!
%\!\!\!\!\!\!\!\!
\mbox{}\hspace{-5em}\mbox{}
\|\vec{a}_{\perp} + \alpha\vec{u}\|^2
\sin^2[\theta_{\vec{\imath},\, \vec{a}_{\perp} + \alpha\vec{u}}]
\stackrel{\vec{\imath} = \vec{e}_{\vec{\imath}}}{=}
\nonumber
\\
&\stackrel{\vec{\imath} = \vec{e}_{\vec{\imath}}}{=}&
\|\vec{a}_{\perp} + \alpha\vec{u}\|^2
\sin^2[\theta_{\vec{e}_{\vec{\imath}},\, \vec{a}_{\perp} + \alpha\vec{u}}]
\nonumber
\\
&=&
\|\vec{a}_{\perp} + \alpha\vec{u}\|^2
-
(\vec{e}_{\vec{\imath}} \cdot (\vec{a}_{\perp} + \alpha\vec{u}))
\nonumber
\\
&=&
\|\vec{a}_{\perp}\|^2
-
(\vec{e}_{\vec{\imath}} \cdot \vec{a}_{\perp})^2
+
(1-(\vec{e}_{\vec{\imath}} \cdot \vec{u})^2) \alpha^2
\nonumber \\
&& \mbox{}\hspace{10em}\mbox{}
-
2 (\vec{e}_{\vec{\imath}} \cdot \vec{a}_{\perp}) (\vec{e}_{\vec{\imath}} \cdot \vec{u}) \alpha
\nonumber
\\
&\ge&
\|\vec{a}_{\perp}\|^2
-
(\vec{e}_{\vec{\imath}} \cdot \vec{a}_{\perp})^2
\nonumber \\
&& \mbox{}\hspace{0em}\mbox{}
+
\min_{\alpha}\large[
\underbrace{(1-(\vec{e}_{\vec{\imath}} \cdot \vec{u})^2)}_{\ge 0} \alpha^2
-
2 (\vec{e}_{\vec{\imath}} \cdot \vec{a}_{\perp}) (\vec{e}_{\vec{\imath}} \cdot \vec{u}) \alpha
\large]
\nonumber
\\
&=&
\|\vec{a}_{\perp}\|^2
-
(\vec{e}_{\vec{\imath}} \cdot \vec{a}_{\perp})^2
-
\frac{(\vec{e}_{\vec{\imath}} \cdot \vec{a}_{\perp})^2 (\vec{e}_{\vec{\imath}} \cdot \vec{u})^2}{1-(\vec{e}_{\vec{\imath}} \cdot \vec{u})^2}
\nonumber
\\
&=&
\|\vec{a}_{\perp}\|^2
-
\frac{(\vec{e}_{\vec{\imath}} \cdot \vec{a}_{\perp})^2}{1-(\vec{e}_{\vec{\imath}} \cdot \vec{u})^2}
\nonumber
\\
&
%\stackrel{\vec{e}_{\vec{\imath}} = \cos(\eta) (\vec{e}_{\vec{\imath}})_{\perp} + \sin(\eta) \vec{u}}{=}
=
&
\|\vec{a}_{\perp}\|^2
-
\frac{\cos^2[\eta]((\vec{e}_{\vec{\imath}})_{\perp} \cdot \vec{a}_{\perp})^2}{1-\sin^2[\eta]}
\nonumber
\\
&& \;\; \mbox{(using $\vec{e}_{\vec{\imath}} = \cos(\eta) (\vec{e}_{\vec{\imath}})_{\perp} + \sin(\eta) \vec{u}$)} \nonumber
\\
&=&
\|\vec{a}_{\perp}\|^2
-
((\vec{e}_{\vec{\imath}})_{\perp} \cdot \vec{a}_{\perp})^2
\nonumber
\\
&
%\stackrel{(\vec{e}_{\vec{\imath}})_{\perp} = \vec{\imath}_{\perp}}{=}
=
&
\|\vec{a}_{\perp}\|^2
\sin^2[\theta_{\vec{\imath}_{\perp},\, \vec{a}_{\perp}}]
\eqendspace.
\;\; \mbox{(using $(\vec{e}_{\vec{\imath}})_{\perp} = \vec{\imath}_{\perp}$)}
\nonumber
%\\
%&&
%\left.\right.
%\qquad\qquad\qquad\qquad
%\qquad\qquad\qquad\qquad
%\qed
%%\mbox{{\it Q.E.D.}}
%\nonumber
%
\end{eqnarray}
\end{proof}
%

%========================================================================================================================
\subsection{2. Accuracy of a Generalized Gibbs Ensemble. A single additional integral of motion and a single observable of interest.}
Imagine that we are interested in predicting the infinite time average of the quantum-mechanical mean
of an observable $\hat{A}$, given the initial state. According to the
standard microcanonical scenario, the energy scale is divided onto narrow intervals, $[E_{i},\,E_{i+1}]$. The only information about the
initial state we are given is which energy interval, $i^{\star}$, the quantum expectation value of the energy belongs to. The microanonical
prediction for the time average of the observable in the subsequent evolution is
\begin{eqnarray}
&&
\Prediction[\Mean_{t}[A]]
=
\Mean_{\MC}[A]
\nonumber
\\
&&
\qquad\qquad\qquad\qquad\qquad\qquad
         \equiv \frac{
	                   \sum_{\alpha \in D_{i^{\star}}}\, \lb\alpha\vert \hat{A} \vert\alpha\rb
	                 }
	                 {
	                   \sum_{\alpha \in D_{i^{\star}}}\, 1
	                 }
\eqendspace,
\nonumber
\end{eqnarray}
where $D_{i}$ is the interval of the eigenstate indices $\alpha$ populated by the eigenstates whose energy belongs to the
interval $[E_{i},\,E_{i+1}]$:
\begin{eqnarray}
\alpha \in D_{i} \Leftrightarrow E_{\alpha} \in [E_{i},\,E_{i+1}]
\eqendspace.
\label{defining_D_i}
\end{eqnarray}
Now, assume that in the initial state, the system's energy is measured exactly, yielding a value $E_{\alpha^{\star}}$, but the only information we
get is, again, the interval $D_{i^{\star}}$ this energy belongs to. Assuming that within the interval $D_{i^{\star}}$ each eigenstate can
appear with equal probability, the mean square error of the microcanonical prediction for the quantum-mechanical mean of $\hat{A}$ is
\begin{eqnarray}
\Var_{\MC}[\lb\alpha\vert \hat{A} \vert\alpha\rb]
=
\frac{
       \sum_{\alpha \in D_{i^{\star}}} (\lb\alpha\vert \hat{A} \vert\alpha\rb - \Mean_{\MC}[A])^2
	 }
	 {
	   \sum_{\alpha \in D_{i^{\star}}}\, 1
	 }
\eqendspace.
\nonumber
\end{eqnarray}
The vanishing of this variance in the thermodynamic limit is the essence of the Eigenstate Thermalization Hypothesis
\cite{deutsch1991,srednicki1994,rigol2008}.

Imagine now that we are given an extra piece of information: via a second measurement in the initial state,
we are allowed to place another integral of
motion $\hat{I}$---an observable with all off-diagonal matrix elements equal to zero---to an interval $j^{\star}$, one of a set of intervals
$[I_{j},\,I_{j+1}]$. An ensemble of states with $E_{\alpha} \in [E_{i^{\star}},\,E_{i^{\star}+1}]$ and
$I_{\alpha} \in [I_{j^{\star}},\,I_{j^{\star}+1}]$,
distributed with equal probability, forms a minimal version of the Generalized Gibbs Ensemble \cite{rigol2007}. Its prediction
for the quantum mean of the observable $\hat{A}$ reads:
\begin{eqnarray}
&&
\Prediction[\Mean_{t}[A]]
=
\Mean_{\GGE\,\vert\,j^{\star}}[A]
\nonumber
\\
&&
\qquad\qquad\qquad\qquad\qquad\qquad
         \equiv \frac{
	                   \sum_{\alpha \in D_{i^{\star}} \cap S_{j^{\star}} }\, \lb\alpha\vert \hat{A} \vert\alpha\rb
	                 }
	                 {
	                   \sum_{\alpha \in D_{i^{\star}} \cap S_{j^{\star}} }\, 1
	                 }
\eqendspace,
\nonumber
\end{eqnarray}
where
\begin{eqnarray}
\alpha \in S_{j} \Leftrightarrow I_{\alpha} \in [I_{j},\,I_{j+1}]
\eqendspace.
\nonumber
\end{eqnarray}
Here and below, $I_{\alpha} \equiv \lb\alpha\vert \hat{I} \vert\alpha\rb$.

The predictions of the new ensemble are, by construction, always more accurate than the microcanonical ones.
Using Lemma~\ref{lemma_relating_the_three} from Sec.~1,
one can prove the following
%
%\begin{itemize}
%
%\item[]
%
%\underline{Theorem.}\,\,
\begin{thm}
In predicting the infinite time average of an observable $\hat{A}$,
the mean square error (MSE) of the prediction of Generalized Gibbs Ensemble based on an integral of motion $\hat{I}$,
\begin{eqnarray}
%
% change: since you use the notation "\Var_{\GGE}[...] in the next
% equation, you must define it here
&& \MSE[\Prediction[\Mean_{t}[A]]] \equiv \Var_{\GGE}[\lb\alpha\vert \hat{A} \vert\alpha\rb] \nonumber \\
&& \qquad =
\frac{
	   \sum_{j} \sum_{\alpha \in D_{i^{\star}} \cap S_{j} } (\lb\alpha\vert \hat{A} \vert\alpha\rb - \Mean_{\GGE\,\vert\,j}[A])^2
	 }
	 {
	   \sum_{\alpha \in D_{i^{\star}}}\, 1
	 }
\eqendspace,
\nonumber
\end{eqnarray}
(where it is assumed that the initial states are still uniformly distributed inside $[E_{i^{\star}},\,E_{i^{\star}+1}]$)
is bounded from the above as
\begin{widetext}
%
%\begin{eqnarray}
\begin{multline}
%
%&&
\MSE[\Prediction[\Mean_{t}[A]]] \equiv \Var_{\GGE}[\lb\alpha\vert \hat{A} \vert\alpha\rb] \; \leq
\\
     \Var_{\MC}[\lb\alpha\vert \hat{A} \vert\alpha\rb] \,
     \left(\sin^2[\theta_{\hat{I}_{\rmsubscr{tl}},\, \hat{\hat{P}}_{H} \hat{A}_{\rmsubscr{tl}}}]
        +
        |\cos[\theta_{\hat{I}_{\rmsubscr{tl}},\, \hat{\hat{P}}_{H} \hat{A}_{\rmsubscr{tl}}}]| \,
        \underbrace
        {
        {\cal O}\left[
      	             \frac{\Delta I}{\sqrt{\Var_{\MC}[\lb\alpha\vert \hat{I} \vert\alpha\rb]}}
      	           \right]
        }_{\ll 1}
     \right)
\label{upper_bound}
\eqendspace,
%\nonumber
%
\end{multline}
%\end{eqnarray}
\end{widetext}
where
$\theta_{\hat{B}_{1},\, \hat{B}_{2}}$ is the Hilbert-Schmidt angle between the observables
$\hat{B}_{1}$ and $\hat{B}_{2}$,
a traceless version of a given observable, $\hat{B}$, is defined as $\hat{B}_{\rmsubscr{tl}} \equiv \hat{B} - \Trace[\hat{B}]/N_{\MC}$,
$\hat{\hat{P}}_{H}$ is a ``super-operator'' that removes the off-diagonal (with respect to the basis of the eigenstates
of $\hat{H}$) matrix elements,
and $\Delta I \equiv \max_{j} (I_{j+1} - I_{j})$ is the maximal width of a microcanonical window for the additional integral of motion.
%
%\end{itemize}
\end{thm}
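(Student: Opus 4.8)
The plan is to recognize every quantity in the statement as a squared norm or a projection inside the $N_{\MC}$-dimensional real Euclidean space whose vectors are indexed by the eigenstate labels $\alpha$ populating the microcanonical window $D_{i^{\star}}$, equipped with the normalized inner product $(\vec{x}\cdot\vec{y}) \equiv N_{\MC}^{-1}\sum_{\alpha}x_{\alpha}y_{\alpha}$, and then to feed this into Lemma~\ref{lemma_relating_the_three}. Under this identification the vector $\vec{a}$ with components $a_{\alpha} = \lb\alpha\vert\hat{A}_{\rmsubscr{tl}}\vert\alpha\rb$ is, up to an irrelevant overall scale, exactly the Hilbert-Schmidt vector of the diagonal part $\hat{\hat{P}}_{H}\hat{A}_{\rmsubscr{tl}}$, and likewise $\vec{\imath}$ with $\imath_{\alpha} = \lb\alpha\vert\hat{I}_{\rmsubscr{tl}}\vert\alpha\rb$ is the HS vector of $\hat{I}_{\rmsubscr{tl}}$ (which is already diagonal). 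Consequently the angle between $\vec{\imath}$ and $\vec{a}$ equals the HS angle $\theta_{\hat{I}_{\rmsubscr{tl}},\,\hat{\hat{P}}_{H}\hat{A}_{\rmsubscr{tl}}}$, while $\|\vec{a}\|^2 = \Var_{\MC}[\lb\alpha\vert\hat{A}\vert\alpha\rb]$ and $\|\vec{\imath}\|^2 = \Var_{\MC}[\lb\alpha\vert\hat{I}\vert\alpha\rb]$: the subtraction of the trace is precisely the centering that turns these norms into microcanonical variances.

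Next I would introduce the orthogonal projection $\hat{p}$ onto the subspace of vectors that are constant on each GGE cell $D_{i^{\star}}\cap S_{j}$ — that is, $(\hat{p}\vec{x})_{\alpha} = \Mean_{\GGE\,|\,j(\alpha)}[x]$, the ``conditional average given the cell'' map — together with its complement $\hat{q}\equiv\hat{1}-\hat{p}$; both are real symmetric idempotents, as required by the lemma. The law of total variance (the ANOVA decomposition) reads $\Var_{\MC} = \Var_{\mathrm{between\ cells}} + \Var_{\mathrm{within\ cells}}$, and the within-cell term is by definition $\Var_{\GGE}[\lb\alpha\vert\hat{A}\vert\alpha\rb] = \|\hat{q}\vec{a}\|^2$. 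Note that the constant vector lies in the range of $\hat{p}$, so $\hat{q}$ annihilates constants; this is why it is harmless that $\vec{a}$ was centered, and why $\hat{A}$ may be replaced by $\hat{A}_{\rmsubscr{tl}}$ throughout without changing $\|\hat{q}\vec{a}\|$.

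Then I would apply Lemma~\ref{lemma_relating_the_three} with this $\hat{q}$, this $\vec{a}$, and this $\vec{\imath}$, obtaining $\Var_{\GGE}[\lb\alpha\vert\hat{A}\vert\alpha\rb] = \|\hat{q}\vec{a}\|^2 \le \|\vec{a}\|^2\bigl\{\sin^2[\theta] + 2|\cos[\theta]|\,\|\hat{q}\vec{\imath}\|/\|\vec{\imath}\|\bigr\}$ with $\theta = \theta_{\hat{I}_{\rmsubscr{tl}},\,\hat{\hat{P}}_{H}\hat{A}_{\rmsubscr{tl}}}$. The only thing left is to estimate $\|\hat{q}\vec{\imath}\|$: it is the within-cell standard deviation of the numbers $I_{\alpha}$, and since every cell $S_{j}$ gathers eigenstates whose $I_{\alpha}$ lies in one interval $[I_{j},I_{j+1}]$ of length at most $\Delta I$, the variance of $I_{\alpha}$ inside any single cell is bounded by $(\Delta I/2)^2$; hence $\|\hat{q}\vec{\imath}\|/\|\vec{\imath}\| \le (\Delta I/2)/\sqrt{\Var_{\MC}[\lb\alpha\vert\hat{I}\vert\alpha\rb]}$, which is the advertised ${\cal O}[\Delta I/\sqrt{\Var_{\MC}[\lb\alpha\vert\hat{I}\vert\alpha\rb]}]$ factor, the numerical constants $2$ and $1/2$ being swallowed by the ${\cal O}$-symbol. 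Substituting into the lemma's bound yields (\ref{upper_bound}).

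The one genuinely delicate point, and the step I would write out most carefully, is the dictionary of the first paragraph together with the centering bookkeeping of the second: one must verify that passing from $\hat{A}$ to $\hat{A}_{\rmsubscr{tl}}$ (and from $\hat{I}$ to $\hat{I}_{\rmsubscr{tl}}$) alters neither $\Var_{\GGE}$ nor $\Var_{\MC}$ nor the HS angle entering the bound — which holds because $\hat{q}$ kills the identity direction and because both microcanonical variances and HS angles are invariant under shifts along the identity. Everything else — the ANOVA identity and the elementary $(\Delta I/2)^2$ bound on the spread of a set of reals confined to an interval — is routine.
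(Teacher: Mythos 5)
Your proof is correct, and it follows the paper's own route in all essentials: the same Lemma~\ref{lemma_relating_the_three}, the same conditional-averaging projector $\hat{p}$ (with $\hat{q}=\hat{1}-\hat{p}$), and the same dictionary identifying $\|\hat{q}\vec{a}\|^2$ with $\Var_{\GGE}[\lb\alpha\vert \hat{A} \vert\alpha\rb]$, $\|\vec{a}_{\perp}\|^2$ with $\Var_{\MC}[\lb\alpha\vert \hat{A} \vert\alpha\rb]$, and $\|\hat{q}\vec{\imath}\|$ with the within-cell spread of $I_{\alpha}$, bounded by $\Delta I/2$; your Popoviciu-type estimate is exactly the (implicit) origin of the paper's ${\cal O}[\Delta I/\sqrt{\Var_{\MC}[\lb\alpha\vert \hat{I} \vert\alpha\rb]}]$ factor. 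The one place you genuinely diverge is the handling of the identity direction. The paper applies Lemma~\ref{lemma_relating_the_three} first to the \emph{uncentered} vectors, so that $\|\vec{a}\|^2=\Var_{\MC}[\lb\alpha\vert \hat{A} \vert\alpha\rb]+\Mean_{\MC}[A]^2$, and then invokes a separate statement, Lemma~\ref{lemma_on_two_reals}, to argue that the dominant term may be replaced by its centered version. You instead feed the already-centered vectors directly into Lemma~\ref{lemma_relating_the_three}, justified by the single observation that $\hat{q}$ annihilates vectors constant on the cells, so that $\hat{q}\vec{a}_{\rmsubscr{tl}}=\hat{q}\vec{a}$ and $\hat{q}\vec{\imath}_{\rmsubscr{tl}}=\hat{q}\vec{\imath}$. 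This shortcut is valid---the lemma holds for arbitrary vectors and the cell-averaging map fixes constants---and it is arguably the cleaner way to land on the centered bound: Lemma~\ref{lemma_on_two_reals} as stated only asserts that the centered dominant term is \emph{smaller} than the uncentered one, so your direct application closes the logical gap more transparently, at the cost of not isolating, as the paper does, the general principle that projecting out a fixed direction $\vec{u}$ can only tighten the leading term.
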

%
% change:
% The estimate (\ref{upper_bound}) shows that
% the integrals of motion a small Hilbert-Schmidt angle to the observable of
% deliver the most increase in the predictive power of
%the Generalized Gibbs Ensemble.
The estimate in Eq.~(\ref{upper_bound}) shows that the biggest increase in the predictive power of the Generalized Gibbs Ensemble is delivered by
the integrals of motion at a small Hilbert-Schmidt angle to the observable of interest.
To the dominant order in the size of the microcanonical boxes for the integral of motion $\Delta I$,
the bound does not depend on the details of the partition of the axis of the integral of motion $I$.

Remark that if the bound (\ref{upper_bound}) is used to assess the accuracy in a quench from a linear superposition of the eigenstates,
the variance of the integral of motion in the initial state must be smaller than $\Delta I$.

The dictionary used to relate Lemma~\ref{lemma_relating_the_three} to the bound (\ref{upper_bound}) is as follows:
\begin{eqnarray}
\vec{a} &\to & \{\lb\alpha\vert \hat{A} \vert\alpha\rb\ \, \vert \, \alpha \in D_{i^{\star}} \}
\nonumber
\\
\vec{\imath} &\to & \{\lb\alpha\vert \hat{I} \vert\alpha\rb\ \, \vert \, \alpha \in D_{i^{\star}} \}
\nonumber
\\
\hat{q}\vec{a} &\to & \{\lb\alpha\vert \hat{A} \vert\alpha\rb - \Mean_{\GGE\,\vert\,j(\alpha)}[A] \, \vert \, \alpha \in D_{i^{\star}} \}
\nonumber
\\
\hat{q}\vec{\imath} &\to & \{\lb\alpha\vert \hat{I} \vert\alpha\rb - \Mean_{\GGE\,\vert\,j(\alpha)}[I]  \, \vert \, \alpha \in D_{i^{\star}} \}
\nonumber
\\
\|\vec{a}\|^2 &\to & \Var_{\MC}[\lb\alpha\vert \hat{A} \vert\alpha\rb] + \Mean_{\MC}[A]^2
\nonumber
\\
\|\vec{\imath}\|^2 &\to & \Var_{\MC}[\lb\alpha\vert \hat{I} \vert\alpha\rb] + \Mean_{\MC}[I]^2
\nonumber
\\
\|\hat{q}\vec{a}\|^2 &\to & \Var_{\GGE}[\lb\alpha\vert \hat{A} \vert\alpha\rb]
\nonumber
\\
\|\hat{q}\vec{\imath}\|^2 &\to & \Var_{\GGE}[\lb\alpha\vert \hat{I} \vert\alpha\rb]
\nonumber
\eqendspace.
\end{eqnarray}
%
% change:
%The Lemma 2 shows that the first (dominant in our context) term of the
% bound (\ref{upper_bound_abstract}) can be further improved
% by removing a component along a particular unit vector $\vec{u}$.
Lemma~\ref{lemma_on_two_reals} shows that the first term of the bound in Eq.~(\ref{upper_bound_abstract}) (which is the dominant term in our context) can be further improved by removing a component along a particular unit vector $\vec{u}$.
In our case, the role of the vector $\vec{u}$ will be played
by a Hilbert-Schmidt-normalized identity operator. The following dictionary completes the proof of Eq.~(\ref{upper_bound}):
\begin{eqnarray}
\vec{a_{\perp}} &\to & \{\lb\alpha\vert \hat{A}_{\rmsubscr{tl}} \vert\alpha\rb\ \, \vert \, \alpha \in D_{i^{\star}} \}
\nonumber
\\
\vec{\imath}_{\perp} &\to & \{\lb\alpha\vert \hat{I}_{\rmsubscr{tl}} \vert\alpha\rb\ \, \vert \, \alpha \in D_{i^{\star}} \}
\nonumber
\\
\vec{u} &\to & \{1/\sqrt{N_{\MC}} \, \vert \, \alpha \in D_{i^{\star}} \}
\nonumber
\\
\|\vec{a}_{\perp}\|^2 &\to & \Var_{\MC}[\lb\alpha\vert \hat{A} \vert\alpha\rb]
\nonumber
\\
\|\vec{\imath}_{\perp}\|^2 &\to & \Var_{\MC}[\lb\alpha\vert \hat{I} \vert\alpha\rb]
\nonumber
\\
\|\hat{q}\vec{a}_{\perp}\|^2 &\to & \Var_{\GGE}[\lb\alpha\vert \hat{A} \vert\alpha\rb]
\nonumber
\\
\|\hat{q}\vec{\imath}_{\perp}\|^2 &\to & \Var_{\GGE}[\lb\alpha\vert \hat{I} \vert\alpha\rb]
\nonumber
\eqendspace.
\end{eqnarray}
%

%========================================================================================================================
\subsection{3. Several additional integrals of motion and a single observable of interest.}
One can further introduce another integral of motion, $\hat{I}_2$, orthogonal to the first one:
$((\hat{I}_{2})_{\rmsubscr{tl}}\vert \hat{I}_{\rmsubscr{tl}}) = 0$. Introducing a fictitious observable
with matrix elements $\lb\alpha\vert \hat{A} \vert\alpha\rb - \Mean_{\GGE\,\vert\,j(\alpha)}[A]$, the bound
(\ref{upper_bound}) can be generalized to
\begin{eqnarray}
\MSE[\Prediction[\Mean_{t}[A]]] \equiv \Var_{\GGE}[\lb\alpha\vert \hat{A} \vert\alpha\rb]
\lesssim &&
\nonumber \\
  && \mbox{}\hspace{-21em}
  \Var_{\MC}[\lb\alpha\vert \hat{A} \vert\alpha\rb]
    \times  \nonumber \\
&&   \mbox{}\hspace{-18em}
     \left(
     1 - \cos^2[\theta_{\hat{I}_{\rmsubscr{tl}},\, \hat{\hat{P}}_{H} \hat{A}_{\rmsubscr{tl}}}] - \cos^2[\theta_{(\hat{I}_{2})_{\rmsubscr{tl}},\, \hat{\hat{P}}_{H} \hat{A}_{\rmsubscr{tl}}}]
    \right)
\eqendspace.
\nonumber
\end{eqnarray}
Furthermore, by induction, one can extend this bound to any number $N_{I}$ of (mutually orthogonal) integrals of motion:
\begin{eqnarray}
\MSE[\Prediction[\Mean_{t}[A]]] \equiv \Var_{\GGE}[\lb\alpha\vert \hat{A} \vert\alpha\rb]
\lesssim &&
\nonumber \\
&&
\mbox{}\hspace{-20em}\mbox{}
\Var_{\MC}[\lb\alpha\vert \hat{A} \vert\alpha\rb]
    \left(
     1 -
     \sum_{m=1}^{N_{I}}\cos^2[\theta_{(\hat{I}_{m})_{\rmsubscr{tl}},\, \hat{\hat{P}}_{H} \hat{A}_{\rmsubscr{tl}}}]
    \right)
\eqendspace.
\nonumber
\end{eqnarray}
For future applications, we will quote two other expressions for this bound:
\begin{eqnarray}
\MSE[\Prediction[\Mean_{t}[A]]] \equiv \Var_{\GGE}[\lb\alpha\vert \hat{A} \vert\alpha\rb]
\lesssim &&
\label{bound_100}
\\
&&
\mbox{}\hspace{-20em}\mbox{}
\Var_{\MC}[\lb\alpha\vert \hat{A} \vert\alpha\rb]
-
\Var_{\MC}[A] \sum_{m=1}^{N_{I}}\cos^2[\theta_{(\hat{I}_{m})_{\rmsubscr{tl}},\, \hat{A}_{\rmsubscr{tl}}}]
\eqendspace,
\nonumber
\end{eqnarray}
where $\Var_{\MC}[A] \equiv \Mean_{\MC}[A^2]-\Mean_{\MC}[A]^2$; and
\begin{eqnarray}
\MSE[\Prediction[\Mean_{t}[A]]] \equiv \Var_{\GGE}[\lb\alpha\vert \hat{A} \vert\alpha\rb]
\lesssim &&
\label{bound_110}
\\
&&
\mbox{}\hspace{-21em}\mbox{}
\Var_{\MC}[A] \cos^2[\theta_{{\cal L}_{\rmsubscr{tl,\,d-}\hat{H}}/\Span[\{(\hat{I}_{m})_{\rmsubscr{tl}} \;|\; m=1,\,\dots,\,N_{I} \}],\, \hat{A}_{\rmsubscr{tl}}}]
\eqendspace,
\nonumber
\end{eqnarray}
where ${\cal L}/{\cal L}_{\rmsubscr{sub}}$ is the orthogonal complement of a subspace ${\cal L}_{\rmsubscr{sub}}$ of a vector space ${\cal L}$ with respect to an inner product
$(A | B)$ (the Frobenius or Hilbert-Schmidt product $(\hat{A} | \hat{B})$ in our case).
%

%========================================================================================================================
\subsection{4. Several additional integrals of motion and several observables of interest.}
Now, assume that we have several observables of interest (spanning a linear space ${\cal L}_{\rmsubscr{o.i.}}$) whose long-term
behavior we want to be able to predict.  Assume further that we are given several integrals
of motion (spanning a space ${\cal L}_{I,\rmsubscr{tl}}$) we are allowed to use as thermodynamical predictors. Next, we are going to form linear spaces
${\cal L}_{\rmsubscr{o.i.,tl}}$ and ${\cal L}_{I,\rmsubscr{tl}}$ spanned by the traceless versions of the same observables. Now, let us
introduce orthonormal bases for both spaces:
\begin{eqnarray}
&&
(\hat{A}_{q})_{\rmsubscr{tl}} \quad \mbox{with} \, q=1,\,2,\,\ldots,\,N_{o.i}
\nonumber
\\
&&
((\hat{A}_{q^{}})_{\rmsubscr{tl}} | (\hat{A}_{q'})_{\rmsubscr{tl}})=\delta_{q,\,q'}
\nonumber
\\
&&
\hat({I}_{m})_{\rmsubscr{tl}} \quad \mbox{with} \, m=1,\,2,\,\ldots,\,N_{I}
\nonumber
\\
&&
((\hat{I}_{m^{}})_{\rmsubscr{tl}} | (\hat{I}_{m'})_{\rmsubscr{tl}})=\delta_{m,\,m'}
\eqendspace.
\nonumber
\end{eqnarray}
In particular, this implies that $\Var_{\MC}[A_{q}] = \Var_{\MC}[I_{m}] = 1$ for all $q$ and $m$.
Let us now form the following matrix:
\begin{eqnarray}
&&
R_{m,\,m'} = \sum_{q=1}^{N_{o.i}}
                 \cos[\theta_{(\hat{I}_{m^{}})_{\rmsubscr{tl}},\, (\hat{A}_{q})_{\rmsubscr{tl}}}]
                 \cos[\theta_{(\hat{A}_{q})_{\rmsubscr{tl}},\, (\hat{I}_{m'})_{\rmsubscr{tl}}}]
\nonumber
\\
&&
m,\,m'=1,\,2,\,\ldots,\,N_{I}
\eqendspace,
\nonumber
\end{eqnarray}
cf.\, Eqn.\ \ref{bound_100}.

% change:
% It can be shown that
% the eigenvectors of $R_{m,\,m'}$, $\hat{\tilde{I}}_{\tilde{m}}$, when
% ordered in the descending order of the eigenvalues
% $r_{\tilde{m}}$ form a sequence such that the first $N_{opt.}$ members
% constitute the most optimal
%Generalized Gibbs Ensemble involving $N_{opt.}$ integrals, such that it
% minimizes on average the error in the ensemble prediction for
% the infinite time averages of observables.
It can be shown that the eigenvectors of $R_{m,\,m'}$, $\hat{\tilde{I}}_{\tilde{m}}$, when ordered in the descending order of their eigenvalues $r_{\tilde{m}}$
%%%%%% do you not want to introduce the concept of "relevances"?
% (which are their respective relevances)
%%%%%%%%%%%%%%%%%%%%%%%%%%%%%%%%%%%%%%%%%%%%%%%%%%%%%%%%%%%%%%%%
form a sequence such that the first $N_{opt.}$ members constitute the most optimal
Generalized Gibbs Ensemble involving $N_{opt.}$ integrals, in the sense that it minimizes, on average, the error in the prediction of the ensemble.
The requests for linear combinations of the observables of interest are supposed to be distributed according to a spherically
symmetric probability distribution, with respect to the Hilbert-Schmidt measure.

We show examples of a relevance sequence in
Figs.\ 3a and b.
In the examples considered in Fig.\ 3, the space ${\cal L}_{I,\rmsubscr{tl}}$ was represented by the whole space of traceless integrals of motion
${\cal L}_{\rmsubscr{tl,\,d-}\hat{H}}$. The number of the most optimal (relevant) integrals of motion was 16 for both Fig.\ 3c and Fig.\ 3d.

\subsection{5. The classical billiard with a soft-core scatterer.}
The classical mechanics example considered in the letter consists of a two-dimensional particle of mass $m$, moving in a rectangular billiard with periodic boundary conditions (thus topologically equivalent to a torus). As an integrability-breaking perturbation, we use a ``truncated'' $\delta $-function: a potential that consists of the first $M$ (including the zeroth one) spatial harmonics of a $\delta $-function (see Fig.\ \ref{Olshanii_figSD1}). %%%%%%%%%%%%%%%%%%%FIGURE<Olshanii_figSD1>%%%%%%%%%%%%%%%%%%%%%%%%%%%%%%
\begin{figure}
%[htb!]
%
\mbox{}\\
\mbox{}\\
\mbox{}\\
\begin{center}
\includegraphics[scale=.25]{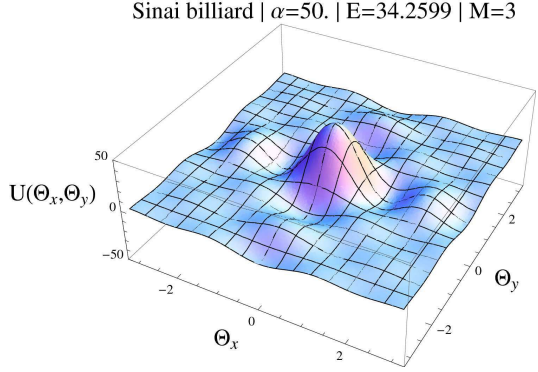}
\end{center}
\vspace{-0.3cm}
\caption{
\label{Olshanii_figSD1}
%
%\sffamily\noindent\textbf{Figure SD1}$\,|\,$
%
{\bf The integrability-breaking barrier for the classical system considered:}. The plot
shows the ``truncated'' $\delta $-function potential used as an integrability-breaking
perturbation acting on a particle in a two-dimensional square billiard.}
\end{figure}
%%%%%%%%%%%%%%%%%%%end FIGURE<Olshanii_figSD1>%%%%%%%%%%%%%%%%%%%%%%%
%
The resulting hamiltonian is
%
%\begin{eqnarray}
\begin{multline*}
%
%&&
H(x,\,y,\,p_{x},\,p_{y}) =
  \frac{p_{x}^2}{2m} + \frac{p_{y}^2}{2m}
  \\
%\nonumber
%&& \quad
+
  V\,
  \frac{\sin[2\pi (M+1/2)x/L]}{(2M+1)\sin[\pi x/L] }
  \frac{\sin[2\pi (M+1/2)y/L]}{(2M+1)\sin[\pi y/L] }
\eqendspace,
\end{multline*}
%\end{eqnarray}
where
\vspace{-1.0\baselineskip}\mbox{}
\begin{eqnarray*}
%\nonumber
%\\
%&&
%\nonumber
%\\
&&
0 \le x < L\,;\quad (x=L) \equiv (x=0)
%\nonumber
\\
&&
0 \le y < L\,;\quad (y=L) \equiv (y=0)
\eqendspace.
%\nonumber
%
\end{eqnarray*}
In what follows, we will be using a system of units where
%$m = L/(2\pi) =1$.
$m = \frac{L}{2\pi} =1$.
The hamiltonian then becomes
%
%\begin{eqnarray}
\begin{multline*}
%
%&&
H(\theta_{x},\,\theta_{y},\,I_{x},\,I_{y}) =
  \frac{I_{x}^2}{2} + \frac{I_{y}^2}{2}
%\nonumber
\\
%&& \qquad
+
  V\,
  \frac{\sin[(M+1/2) \theta_{x}]}{(2M+1)\sin[\theta_{x}/2] }
  \frac{\sin[(M+1/2) \theta_{y}]}{(2M+1)\sin[\theta_{y}/2] }
\eqendspace,
%\nonumber
%
\end{multline*}
%\end{eqnarray}
%
where, in this system of units, we have
%$I_{\alpha} \stackrel{m = \frac{L}{2\pi} =1}{=} p_{\alpha}$
$I_{\alpha} = p_{\alpha}$
and
%$\theta_{\alpha} \stackrel{m = L/2\pi =1}{=} r_{\alpha}$;
$\theta_{\alpha} = r_{\alpha}$
($\alpha=x,\,y$); also,
$r_{x} \equiv x$; $r_{y} \equiv y$. We used $M=3$ for all data points.

For any strength of the perturbation $V$, the initial conditions were drawn
from a microcanonical ensemble bounded by the equi-energy surfaces (specific for
a given $V$) in such a way that the phase-space volume below the lower surface
was $W_{b}=7895.7$, and in between the lower and the upper was $W=1184.3$, regardless of the perturbation strength.
A quantum-mechanical analogue of such an ensemble would have the same lower and upper quantum indices of the microcanonical window
for all realizations. To give the reader an idea of the energy scale involved, we will note that for a zero barrier height, the
mean energy in the ensemble is $E_{0}=33.7$.

For the test observable $A$, we used the difference between the kinetic energies in the $x$- and $y$-directions:
$
A = \Delta E_{\textrm{kin.}} \equiv \frac{p_{x}^2}{2m} - \frac{p_{y}^2}{2m}$ (which is
$\frac{I_{x}^2}{2} - \frac{I_{y}^2}{2}$ in our system of units). We studied both the ensemble variance of the temporal mean,
\begin{widetext}
\begin{eqnarray}
%\begin{multline}
%
%&&
\Var_{\MC}[\Mean_{t}[A]] \equiv
%\nonumber
%\\
%&&
%\quad
\frac
{
\int_{\MC} d\theta_{x}^{t=0} d\theta_{y}^{t=0} dI_{x}^{t=0} dI_{y}^{t=0}
   \left\{
%   \Big(
%%%%%%%%%%%%%%%%%
%           \Mean_{t}[A]
%%%%%%%%%%%%%%%%
     \left(
     \lim_{t_{\textrm{max}\to\infty}}\frac{1}{t_{\textrm{ max}}}\int_{t=0}^{t_{\textrm{max}}} \,dt\,A(t)
     \right)
     -
     \Mean_{\MC}[A]
%   \Big)^2
   \right\}^2
}
{
\int_{\MC} d\theta_{x} d\theta_{y} dI_{x} dI_{y}\; 1
}
\eqendspace,
%\nonumber
%
%\end{multline}
\end{eqnarray}
\end{widetext}
%
%%%%
%where
%\[
%\Mean_{t}[A]=\lim_{t_{\textrm{ max}\to\infty}}\frac{1}{t_{\textrm{ max}}}\int_{0}^{t_{\textrm{max}}} A(\tau)\,d\tau
%\eqendspace,
%\]
%%%%
and the ensemble mean of the temporal variance,
\begin{widetext}
\begin{eqnarray}
&&
\Mean_{\MC}[\Var_{t}[A]] \equiv
%\nonumber
%\\
%&&
%\quad
\frac
{
\int_{\MC} d\theta_{x}^{t=0} d\theta_{y}^{t=0} dI_{x}^{t=0} dI_{y}^{t=0}      %
        \lim_{t_{\textrm{ max}\to\infty}}\frac{1}{t_{\textrm{ max}}}
        \int_{t=0}^{t_{\textrm{max}}}\,dt\,
           \left\{
           A(t)
           -
           \left( \lim_{t_{\textrm{ max}\to\infty}}\frac{1}{t_{\textrm{ max}}}\int_{t'=0}^{t_{\textrm{max}\to\infty}}\,dt'\, A(t') \right)
          \right\}^2
}
{
\int_{\MC} d\theta_{x} d\theta_{y} dI_{x} dI_{y}\; 1
}
\eqendspace,
\nonumber
\end{eqnarray}
\end{widetext}
where the ensemble average is
\begin{eqnarray}
&&
\Mean_{\MC}[A] \equiv
\frac
{
\int_{\MC} d\theta_{x} d\theta_{y} dI_{x} dI_{y}\, A
}
{
\int_{\MC} d\theta_{x} d\theta_{y} dI_{x} dI_{y}\, 1
}
\eqendspace,
\nonumber
\end{eqnarray}
$\int_{\MC} d\theta_{x} d\theta_{y} dI_{x} dI_{y} \ldots $ is an integral over the microcanonical volume describe above, and
$A(t)$ the time dependence for the observable $A$ along a trajectory that starts at $(\theta_{x}^{t=0},\,\theta_{y}^{t=0},\,I_{x}^{t=0},\,I_{y}^{t=0})$.

\subsection{6. The hamilitonian for one-dimensional hard-core bosons perturbed by soft-core interactions.}
The quantum hamiltonian used in this Letter reads
%
%\begin{eqnarray}
\begin{multline}
\hat{H}
%&&
= - J \sum_{j} (\hat{b}^{\dagger}_{j}\hat{b}^{}_{j+1} + \textrm{h.c.})
%\nonumber
\\
%         &&
         \quad + \frac{1}{2} \sum_{j_{1}}\sum_{j_{2}} V(j_{1},\ j_{2})
            \hat{b}^{\dagger}_{j_{1}}\hat{b}^{\dagger}_{j_{2}}\hat{b}^{}_{1}\hat{b}^{}_{2}
\eqendspace,
\label{H_QM}
\end{multline}
%\end{eqnarray}
%
where the commutation relations for the hard-core boson creation and annihilation operators obey
\begin{eqnarray}
\lbrack b^{}_{j},\,b^{\dagger}_{j'} \rbrack = \lbrack b^{}_{j},\,b^{}_{j'} \rbrack
=\lbrack b^{\dagger}_{j},\,b^{\dagger}_{j'} \rbrack=0,\quad \mbox{for} \quad j'\neq j
\nonumber
\\
\lbrack b^{}_{j},\,b^{\dagger}_{j} \rbrack=1, \quad (b^{}_{j})^2=(b^{\dagger}_{j})^2=0
\eqendspace.
\nonumber
\end{eqnarray}
The soft-core interaction potential we used was a rectangular potential of a four-site range:
\begin{eqnarray}
V(j_{1},\ j_{2})=U\,
\left\{
\begin{array}{ccc}
1 & \mbox{for} & |j_{2}-j_{1}| \leq \Delta j_{\textrm{range}}
\\
0 &            & \mbox{otherwise}
\end{array}
\right.
\eqendspace,
\nonumber
\end{eqnarray}
with $\Delta j_{\textrm{range}} = 4$ (see Fig.\ \ref{Olshanii_figSD2}).
%%%%%%%%%%%%%%%%%%%FIGURE<Olshanii_figSD2>%%%%%%%%%%%%%%%%%%%%%%%%%%%%%%
\begin{figure}
%[htb!]
%
\begin{center}
\includegraphics[scale=.4]{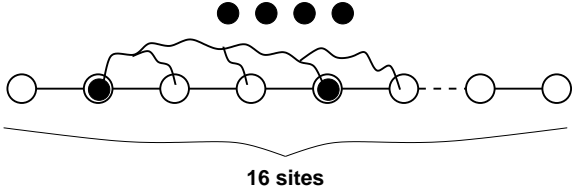}
\end{center}
\vspace{-0.7cm}
\caption{ \label{Olshanii_figSD2}
%
%\sffamily\noindent\textbf{Figure SD2}$\,|\,$
%
{\bf The lattice configuration for the quantum system considered}. One-dimensional hard-core
bosons on a lattice with $L=16$ sites. The number of atoms was $4$ for the system of Fig.~1b and $3$ for the system of Fig.~3. Open boundary conditions were imposed on the former system and periodic ones on the latter. The integrability-breaking perturbation was an added two-body interaction: a constant potential energy $U$ for any two atoms separated by four sites or less.}
\end{figure}
%%%%%%%%%%%%%%%%%%%end FIGURE<Olshanii_figSD2>%%%%%%%%%%%%%%%%%%%%%%%

For the calculations resulting in Fig.\ 1b, the open boundary conditions were used: the first sum
in Eq.~(\ref{H_QM}) was extended to a range between $j=1$ and $j=L-1$, where $L=16$ was the length of
the lattice. Both sums in the second double sum were fixed between $j_{1,2}=1$ and $j_{1,2}=L$.

The calculations that led to Fig.\ 3 were performed using periodic boundary conditions. There, the first sum
involved in Eq.~(\ref{H_QM}) covers all sites, from $j=1$ to $j=L$. In the second sum, the index $j_{1}$ covers
the same range, from $j_{1}=1$ to $j_{1}=L$; the second index, $j_{2}$,
ranges from $j_{2}=1-\Delta j_{\textrm{range}}$
to $j_{2}=L+\Delta j_{\textrm{range}}$, so that the soft-core potential respects the periodic boundary
conditions.

Also, in both cases, a weak random on-site perturbation was used,
%
%\begin{eqnarray}
\[
W \,\sum_{j} \xi_{j} (\hat{b}^{\dagger}_{j}\hat{b}^{}_{j})
\eqendspace,
%\nonumber
%
\]
%\end{eqnarray}
%
where $\left\{ \xi_{j} \right\}$ is a set of independent random variables uniformly distributed
between $-1$ and $+1$, and the strength of the potential was fixed to $W=10^{-4}$. The strenghth $W$ has been
chosen in such a way that it remains weak enough not to alter any of the macroscopic properties, but strong enough
to lift all possible degeneracies and, in the periodic case, to relax the selection rules associated
with the translational invariance.

\subsection{7. Quantum observables presented in Fig.\ 1b.}
In Fig.\ 1b we analyze the properties of two integrals of motion, which are, in the case of a lattice with open boundary conditions, related to the counterparts of the fourth and sixth moments of the momentum distribution of the underlying free fermions.
When expressed through the bosonic
creation and annihilation operators,
the functionals of the fermionic momentum distribution---quadratic in the fermionic representation---become complicated many-body observables, such as
%
%\begin{eqnarray}
\begin{multline}
\hat{I}_{4}
%&&
= (1/2L) \sum_{j=1}^{L-2} (
                             (\hat{b}^{\dagger}_{j}\hat{b}^{}_{j+2} + \textrm{h.c.})
%                             \nonumber
\\
%                             && \qquad
                             -
                              2(
                                   \hat{b}^{\dagger}_{j}\hat{b}^{}_{j+1}\hat{b}^{\dagger}_{j+1}\hat{b}^{}_{j+2}
%                                  G1FullF[[RNeighbor[[s]],s]].G1FullF[[RNeighbor[[RNeighbor[[s]]]],RNeighbor[[s]]]]
                                   +
                                   \hat{b}^{\dagger}_{j+2}\hat{b}^{}_{j+1}\hat{b}^{\dagger}_{j+1}\hat{b}^{}_{j}
%                                   G1FullF[[RNeighbor[[s]],RNeighbor[[RNeighbor[[s]]]]]].G1FullF[[s,RNeighbor[[s]]]]
                               )
%                               \nonumber
\\
%                               && \qquad\qquad
                             -
                             (\hat{b}^{\dagger}_{1}\hat{b}^{}_{1} + \hat{b}^{\dagger}_{L}\hat{b}^{}_{L})
\eqendspace.
\nonumber
\end{multline}
%\end{eqnarray}
%
The second term inside the sum (the four-body one) is absent in the fermionic representation. The last term, present in both representations,
is responsible for the finite size effects originating, in turn, from the open boundary conditions.

We also studied a ``generic'' integral of motion: $\hat{I}_{\textrm{random}} = \sum_{\alpha_{0}} \xi_{\alpha_{0}} \vert \alpha_{0} \rb\lb \alpha_{0} \vert $,
where the states $\vert \alpha_{0} \rb $ are the eigenstates of the hamiltonian in Eq.~(\ref{H_QM}) in the absence of the integrability-breaking perturbation,
and the independent random coefficients $\xi_{\alpha_{0}}$ were uniformly distributed between $-1$ and $+1$.

\subsection{8. Quantum observables presented in Fig.\ 3.}
In Fig.\ 3, we are comparing three thermodynamic ensembles: the microcanonical ensemble (which is based on the energy alone),
\begin{eqnarray}
{\cal L}_{\rmsubscr{o.i.}} = \Span[\{\hat{H}\}]
\eqendspace,
\nonumber
\end{eqnarray}
the conventional generalized Gibbs ensemble
\cite{rigol2007} (which is based on the occupation numbers of all free-fermionic one-body orbitals, or, in the case of periodic boundary conditions,
all moments of the fermionic momentum distribution),
%
%\begin{widetext}
%\begin{eqnarray}
\begin{multline*}
{\cal L}_{\rmsubscr{o.i.}} = \Span\left[\left\{
	\frac{1}{2L} \sum_{j=1}^{L} (
                             (\hat{a}^{\dagger}_{j}\hat{a}^{}_{j+m} + \textrm{h.c.})
                          )
                          ,
\right.\right.
\\
\left.\left.
	\frac{-i}{2L} \sum_{j=1}^{L} (
                             (\hat{a}^{\dagger}_{j}\hat{a}^{}_{j+m} - \textrm{h.c.})
                          )
	                      \;\Big|\; m=1,\,\dots,\, L/2
	                      \right\}\right]
\eqendspace,
\nonumber
\end{multline*}
%\end{eqnarray}
%\end{widetext}
%
and the optimized generalized Gibbs ensemble (which is designed to maximize the quality of prediction
for one-body observables),
%
%\begin{widetext}
%\begin{eqnarray}
\begin{multline*}
{\cal L}_{\rmsubscr{o.i.}} = \Span\left[\left\{ \frac{1}{2} (\hat{b}^{\dagger}_{j'}\hat{b}^{}_{j} + \textrm{h.c.})
                          ,
\right.\right.
\\
\left.\left.
                          \frac{-i}{2} (\hat{b}^{\dagger}_{j'}\hat{b}^{}_{j} - \textrm{h.c.})
	                      \;\Big|\; j=1,\,\dots,\, L; j'=j,\,\dots,\, L
	                      \right\}\right]
\eqendspace.
\nonumber
\end{multline*}
%\end{eqnarray}
%\end{widetext}
%
Only the first few most optimal integrals of motion generated by the latter space were used.
Here $\hat{a}^{}_{j}$ are the free-fermionic annihilation operators, related to their
bosonic counterparts via a Jordan-Wigner map,
$\hat{a}^{}_{j} = \left(\prod_{j'=1}^{j-1} e^{i \pi \hat{b}^{\dagger}_{j'}\hat{b}^{}_{j'}}\right)\hat{b}^{}_{j}$. Here and below,
we are assuming that the number of sites $L$ is even.

To compare the predictive powers of these ensembles, we compute their predictions for the momentum
distribution of the bosons,
\begin{eqnarray}
&&
\hat{A}_{m} =
	\frac{1}{2L} \sum_{j=1}^{L} (
                             \hat{b}^{\dagger}_{j}\hat{b}^{}_{j+m} + \textrm{h.c.}
                          )
\nonumber
\\
&&
m=1,\,\dots,\, L/2
\eqendspace.
\nonumber
\end{eqnarray}
%

%========================================================================================================================

\begin{addendum}
 \item[Acknowledgements] $\!\!\!\!$ The author thanks Marcos Rigol, David Weiss, Vanja Dunjko, Alessandro Silva, and Bala Sundaram for comments. This work was supported by the
     US National Science Foundation Grant No.\ PHY-1019197,
     the Office of Naval Research Grants No.\ N00014-09-1-0502 and N00014-12-1-0400, and a grant from the {\it Institut
     Francilien de Recherche sur les Atomes Froids} (IFRAF). Support from the
     {\it Laboratoire de Physique des Lasers} of Paris 13 University is also appreciated.
 \item[Author contributions] $\!\!\!\!$ All work was done by M.O.
 \item[Additional information] $\!\!\!\!$ Supplementary information is available in the online version of the paper. Reprints and permissions information is available online at www.nature.com/reprints. Correspondence and requests for materials should be addressed to M.O.
\item[Competing financial interests] $\!\!\!\!$ The author declares no competing financial interests.

\end{addendum}
\end{document}